\newcommand{\boundellipse}[3]% center, xdim, ydim
{(#1) ellipse (#2 and #3)
}
\newcommand\arXiv[1]{\href{http://arxiv.org/abs/#1}{\nolinkurl{arXiv:#1}}}
\newcommand\MRnumber[1]{\href{http://www.ams.org/mathscinet-getitem?mr=#1}{\nolinkurl{MR#1}}}
\newcommand\DOI[1]{\href{http://dx.doi.org/#1}{\nolinkurl{DOI:#1}}}
\newcommand\MAILTO[1]{\href{mailto:#1}{\nolinkurl{#1}}}
\newcounter{mainthm}
\newtheorem{dummy}{Dummy}[section]
\newtheorem{lemma}[dummy]{Lemma}
\newtheorem{proposition}[dummy]{Proposition}
\newtheorem{corollary}[dummy]{Corollary}
\newtheorem{definition}[dummy]{Definition}
\newtheorem{theorem}[dummy]{Theorem}
\theoremstyle{definition}
\newtheorem*{rem}{Remark}
\newtheorem*{example}{Example}
\renewcommand\mathbb\mathds
\newcommand\bC{\mathbb C}
\newcommand\bR{\mathbb R}
\newcommand\bZ{\mathbb Z}
\newcommand\cA{\mathcal A}
\newcommand\cB{\mathcal B}
\newcommand\cC{\mathcal C}
\newcommand\cI{\mathcal I}
\newcommand\cL{\mathcal L}
\newcommand\cM{\mathcal M}
\newcommand\cT{\mathcal T}
\newcommand\cW{\mathcal W}
\newcommand\cZ{\mathcal Z}
\newcommand\rB{\mathrm B}
\newcommand\rH{\mathrm H}
\newcommand\rU{\mathrm U}
\newcommand\SH{\mathrm {SH}}
\newcommand\SO{\mathrm {SO}}
\newcommand\SW{\mathrm{S}\mathcal{W}}
\newcommand\gp{\mathrm {gp}}
\newcommand\pt{\mathrm{pt}}
\DeclareMathOperator\homology{H}
\renewcommand\H{\homology}
\newcommand\longto\longrightarrow
\newcommand\mono\hookrightarrow
\newcommand\epi\twoheadrightarrow
\newcommand\isom{\overset\sim\to}
\newcommand\<\langle
\renewcommand\>\rangle
\newcommand\sminus\smallsetminus
\DeclareMathOperator\End{End}
\DeclareMathOperator{\Sq}{Sq}
\DeclareMathOperator\Ext{Ext}
\DeclareMathOperator\Aut{Aut}
\DeclareMathOperator{\Spin}{Spin}
\DeclareMathOperator{\Gal}{Gal}
\newcommand\SVec{\cat{SVec}}
\renewcommand\Vec{\cat{Vec}}
\newcommand\define[1]{\emph{#1}}
\newcommand\cat[1]{\mathbf{#1}}
\title{ Topological Orders in (4+1)-Dimensions} 
\author[Johnson-Freyd and Yu]{Theo Johnson-Freyd$^{\lowercase{a,b}}$ and Matthew Yu$^{*,\lowercase{a}}$}
\thanks{ We thank David Reutter for many discussions about fusion 2-categories, and we thank Tian Lan and Diego Delmastro for comments on the draft.  Research at the Perimeter Institute is supported by the Government of Canada through Industry Canada and by the Province of Ontario through the Ministry of Economic Development and Innovation. 
 The Perimeter Institute is in the Haldimand Tract, land promised to the Six Nations. Dalhousie University is in Mi`kma`ki, the ancestral and unceded territory of the Mi`kmaq.
 \\[6pt]
$^a$ \textsc{Perimeter Institute for Theoretical Physics, Waterloo, Ontario}.\\ 
$^b$ \textsc{Department of Mathematics, Dalhousie University, Halifax, Nova Scotia}.
\\[6pt]
$^*$ Corresponding author. \MAILTO{myu@perimeterinstitute.ca}
}
\begin{document}
\begin{abstract}
   We investigate the Morita equivalences of (4+1)-dimensional topological orders. We show that any (4+1)-dimensional super (fermionic) topological order admits a gapped boundary condition --- in other words, all (4+1)-dimensional super topological orders are Morita trivial.  As a result, there are no inherently gapless super (3+1)-dimensional theories.
   On the other hand, we show that there are infinitely many algebraically Morita-inequivalent bosonic (4+1)-dimensional topological orders.   
\end{abstract}
\maketitle

\section{Introduction}
A physical system described by a Hamiltonian is \define{gapped} when  the spectrum of eigenvalues for the Hamiltonian has a gap between the lowest energy state and the vacuum.  Such systems prevent the existence of particles that are arbitrarily light.  A \define{gapped phase} is an equivalence class of gapped systems.  Systems that can be continuously deformed into each other without closing the energy gap are considered to be in the same phase.  The low-energy limit of gapped phases may exhibit topological behaviour. Such is true for some quantum field theories, which flow in the infrared to topological theories \cite{Gomis_2018}.  All of the dynamical degrees of freedom can be integrated out, leaving only the topological excitations. 
The study of gapped phases in various dimensions has led to interest regarding the topological nature of extended objects, or operators, in these phases.  In nontrivial cases, the content of operators and defects, as well as the algebraic structure of how they interact, compile into a topological order \cite{wen90}.

The classification of topological orders has been an interesting problem that combines the mathematics of higher category theory with the physics of gapped topological phases. By now the classification in lower dimensions is understood.
In (1+1)-dimensions, topological orders are classified by their spectrum of point operators together with anomaly information that manifests as a class in ordinary or supercohomology. Some other well studied situation are in (2+1)d where topological orders with nondegenerate local ground states are classified by modular tensor categories \cite{Wen_2015}, and in (3+1)d where topological order with nondegenerate local ground states are (modulo a few subtleties) always described by finite group gauge theories \cite{Lan_2018,Lan_2019,johnsonfreyd2020classification,Johnson-Freyd:2020twl}.

This paper addresses the classification in (4+1)d. We focus on the case of super topological orders, i.e.\ topological orders defined over the category $\SVec$ of super vector spaces, because the existence of super fibre functors makes this case technically easier. Following the strategy of \cite{Lan_2018,Lan_2019}, the first step is to condense out all of the line operators in the topological order. The resulting topological order has no line operators, and our first result is a classification of these:
 \begin{equation}\label{describedInSection2}
     \{\text{super (4+1)d topological orders with no lines}\} = \{\text{symplectic finite Abelian groups} \footnote{We give the definition  of symplectic finite Abelian group at the start of  section \ref{BoundaryTheory}.}\}.
 \end{equation}
By reducing along a Lagrangian subgroup, we furthermore show that every super (4+1)d topological order can be condensed all the way to the vacuum via a gapped topological boundary:
\begin{equation}\label{SW6}
  \{\text{super (4+1)d topological orders}\}/\text{Morita equivalence} \footnote{By definition, two topological orders are Morita equivalent if they be separated by a gapped topological interface.} = \{1\}.
\end{equation}
This is to be expected, as it agrees with the cobordism classification proposed by \cite{Kapustin:2014tfa}: a Morita-nontrivial super (4+1)d topological order should have a nontrivial gravitational anomaly detectable on (5+1)d spin manifolds, but every (5+1)d spin manifold is spin-nullcobordant.

By studying a spectral sequence introduced in \cite{johnsonfreyd2020classification,Johnson-Freyd:2020twl}, 
the classification (\ref{SW6}) allows us to compute the analogous group for bosonic topological orders. We find that there is an isomorphism:
\begin{equation}\label{W6}
  \{\text{bosonic (4+1)d topological orders}\}/\text{Morita equivalence} \cong \bZ_2^\infty.
\end{equation}
In other words, there are infinitely many pairwise-Morita-inequivalent bosonic (4+1)d topological orders (and each has a gapped boundary to its time-reversal). This disagrees with the cobordism prediction: the cobordism group of (5+1)d oriented manifolds is trivial. The origin of the disagreement, and indeed of the answer (\ref{W6}), is in (2+1)d: the {Witt} groups $\cW$ and $\SW$ of Morita equivalence classes of bosonic and super modular tensor categories, studied in \cite{davydov2011structure}, are very large, whereas the cobordism classification would have predicted a classification in terms of the central charge alone.

The outline of our paper is as follows. Section \ref{5dTopOrder} starts off by explaining how to reduce the set of operators in a (4+1)d topological order to only the surface operators, and how to see that their monoidality is given by a finite Abelian group. In principle this procedure works for the bosonic and fermionic case, up to a small caveat that is remarked upon.  In that section though, we give the explanation specifically for super topological orders. 
 We then review some aspects of fusion and sylleptic 2-categories to understand the nature of how surface operators pair up given three ambient dimensions. The build up is to see by way of a cohomology calculation that (4+1)d topological orders are parametrized by a symplectic form carried by the finite group of surface operators, establishing~\eqref{describedInSection2}. 
 
  Section \ref{BoundaryTheory} outlines the method of symplectic reduction and its relation to Morita equivalence. This allows us to prove (\ref{SW6}) that (4+1)d super topological orders all admit a gapped boundary.  We furthermore give relationships between the bulk and boundary theories, where we interpret the bulk (4+1)d theory to be a higher form of ``centre" for the boundary theory.  To juxtapose with Section \ref{5dTopOrder}, we present a \textit{bosonic} example of how the centre construction goes through.   Lastly, we address the question of lifting boundary theories into the bulk, and obstructions in doing so. 
 
 Section \ref{bosonic5d} explains how we recover a bosonic theory from a fermionic theory plus extra information in ``descent data", and computes the group of Morita equivalence of (4+1)d bosonic topological orders.
 
  In many parts of the paper we will also draw analogies to lower dimensional theories when instructive.

\section{5-dimensional Super Topological Orders}\label{5dTopOrder}

\subsection{Condensing out the lines}

An ($n$+1)-dimensional super topological order is defined in \cite{kong2014braided,kong2015boundarybulk,Kong_2017,johnsonfreyd2020classification} to be a multifusion $n$-supercategory $\cA$ with trivial centre.\footnote{All of our ``$n$-categories'' are ``weak.'' For example, a ``2-category'' is a bicategory. Multifusion 2-categories were first introduced by \cite{douglas2018fusion}, and the n-category generalization was developed in \cite{johnsonfreyd2020classification,kong2020categories}.}
Triviality of the center is an axiomatization of the principle of remote detectability. For our purposes we will be considering only the \textit{fusion }case. By this, we mean that there are no nontrivial 0-dimensional operators.  This is to say that the ground state of our topological order is nondegenerate \cite{yu2020symmetries}.  The principle of remote detectability, along with the fusion condition, implies that all codimension-1 operators arise as condensation descendants \cite[Theorem 4]{johnsonfreyd2020classification}.  
In an arbitrary 5d\footnote{For the remainder of this paper whenever the dimension of an extended object or phase is given without the time component specified, we will take that dimension to represent the full spacetime dimension.} topological order given by the fusion 4-category $\cA$, we therefore only need to consider operators of codimension-2 and higher. 
We will focus on the \define{super} case in which $\cA$ is enriched over $\SVec$.

We will deem two 5d topological orders as being \textit{Morita equivalent} if they can be separated by a gapped 4d topological interface; this is also known as \textit{Witt equivalence}.  One way to produce a Morita equivalence is to perform a categorical condensation \cite{gaiotto2019condensations}, where the condensation wall that separates the two phases is gapped and described by its own higher category of operators. 

The first main step in our classification of 5d topological orders is to use the method outlined in \cite{Lan_2019,Lan_2018} to condense out all the lines in any super 5d topological order. Here is a streamlined version of their construction, written in the language of \cite{gaiotto2019condensations,johnsonfreyd2020classification}:

Within the super fusion 4-category $\cA$ describing the topological order, there is a symmetric super fusion 1-category $\Omega^{3}\cA$ of line operators. Suppose that we choose a functor $F: \Omega^{3} \cA \to \SVec$ of symmetric super fusion 1-categories. Such $F$ is called a \emph{fibre functor}, and in the super case always exists \cite{deligne2002}; since $\cA$ is assumed to be fusion, $F$ is unique up to isomorphism, although not up to unique isomorphism.

This $F$ can be ``suspended'' to a functor $\Sigma^3 F : \Sigma^3\Omega^3 \cA \to \Sigma^3\SVec$, where $\Sigma^3\Omega^3 \cA \subset \cA$ is the sub 4-category of operators which arise as condensation descendants from line operators, and $\Sigma^3\SVec$ is the 4-category of operators in the vacuum 5d super topological order. This $\Sigma^3 F$ makes  the 4-category $\Sigma^3\SVec$ into a module for the fusion 4-category $\Sigma^3\Omega^3 \cA$. We may induce (aka base change) this module along the inclusion $\Sigma^3\Omega^3 \cA \subset \cA$ to produce an $\cA$-module
$$ \cM := \cA \otimes_{\Sigma^3\Omega^3 \cA} \Sigma^3\SVec.$$
We set $\cB := \End_\cA(\cM)$ to be the super fusion 4-category of $\cA$-linear endomorphisms of $\cM$; then $\cM$ is a Morita equivalence $\cA \simeq \cB$\footnote{This construction presently outlined also goes by the name \textit{deequivariantization}.}. Because we started with a fibre functor on the full category of line operators in $\cA$, there are no nontrivial line operators in $\cB$, i.e.\ $\Omega^3 \cB = \SVec$.
\begin{rem}
  In the case of bosonic topological orders, to condense all the lines would require choosing a bosonic fibre functor $\Omega^3 \cA \to \cat{Vec}$. Such a functor exists if and only if there are no emergent fermions \cite{deligne2002}.
\end{rem}

Since $\Omega^3\cB =  \SVec$, the result of \cite[Theorem 5]{johnsonfreyd2020classification} implies that  $\cB = \Sigma^2 \cC$, where $\cC := \Omega^2 \cB$ is the (sylleptic) fusion $2$-category of surface operators (and junctions between them); the statement $\cB = \Sigma^2 \Omega^2 \cB$ means that all three- and four-dimensional ``membrane" objects can be built as condensation descendants of surface operators. But $\Omega \cC = \Omega^3 \cB = \SVec$, i.e. it is \define{strongly super fusion}:
\begin{definition}[\cite{johnsonfreyd2020fusion}]
A super fusion 2-category $\cC$ is \define{strongly super fusion} if $\Omega \cC := \End_\cC(1) \cong \SVec$.
\end{definition}

An object in a (super) fusion 2-category is \define{indecomposable} if it is nonzero and cannot be written as a direct sum of nonzero objects; recall from \cite{douglas2018fusion} that in a (super) fusion 2-category, an object is indecomposible iff it is simple.
Two indecomposable objects are in the same \define{component} if they are related by a nonzero morphism; the set of components of a (super) fusion 2-category $\cC$ is denoted $\pi_0 \cC$. The second main step in our classification of 5d topological orders is a classification of strongly fusion 2-categories that we established in \cite{johnsonfreyd2020fusion}:

\begin{theorem}[{\cite[Theorem B]{johnsonfreyd2020fusion}}]\label{stronglyfusion}
If $\cC$ is a (super) fusion $2$-category
  with $\Omega\cC \cong \cat{SVec}$, then every indecomposable object of $\cC$ is invertible.
   The equivalence classes of indecomposable objects in $\cC$ form a finite group, which is a central double cover of the group $\pi_0\cC$ of components of $\cC$ (in particular, $\pi_0\cC$ is a group).
\end{theorem}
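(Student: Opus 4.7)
The plan is to establish the theorem in three stages, with the bulk of the work in the first.

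\emph{Stage 1 (invertibility).} Let $X$ be an indecomposable object of $\cC$. By the rigidity of fusion $2$-categories, $X$ has a dual $X^*$, and the coevaluation $1 \to X \otimes X^*$ exhibits the unit as a summand. The aim is to show that $X \otimes X^* \cong 1$, equivalently that $X$ is invertible. The hom-category $\End_\cC(X) \simeq \mathrm{Hom}_\cC(1, X \otimes X^*)$ is a super fusion $1$-category (with convolution coming from composition), and the splitting of the coevaluation produces a monoidal functor $\End_\cC(X) \to \Omega\cC = \SVec$. I would then argue that the strongly super fusion hypothesis forces this functor to be an equivalence, hence $\End_\cC(X) \cong \SVec$; a Frobenius--Perron dimension count then precludes any further simple summands of $X \otimes X^*$, yielding $X \otimes X^* \cong 1$.

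The main obstacle is showing $\End_\cC(X) \cong \SVec$. My intended route is through the classification of indecomposable $\SVec$-module $1$-categories: up to equivalence these are $\SVec$ and $\cat{Vec}$, of which only $\SVec$ is invertible. A super Schur-type argument exploiting the indecomposability of $X$, together with the fact that the splitting of the coevaluation provides a $\SVec$-valued fibre functor on $\End_\cC(X)$, should rule out the $\cat{Vec}$ summand. Alternatively, one can combine Deligne's super-Tannakian reconstruction with induction on the $2$-categorical structure to reduce to a pointed/``grouplike'' situation and verify the claim there.

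\emph{Stages 2 and 3 (group structure and double cover).} Once all indecomposables are invertible, the tensor product induces a group operation on their equivalence classes, with identity $[1]$, inverses $[X]^{-1} = [X^*]$, and finiteness from the fusion hypothesis. The fermion $f \in \SVec = \Omega\cC$ lifts to an invertible object of order $2$ in $\cC$, and it is central since it originates in the symmetric centre. Two invertibles $X, Y$ lie in the same connected component iff $\mathrm{Hom}_\cC(X, Y) \neq 0$, which for invertibles is equivalent to $Y \simeq X$ or $Y \simeq X \otimes f$. Hence $G \twoheadrightarrow \pi_0\cC$ has kernel $\langle f \rangle \cong \bZ/2$, exhibiting $G$ as a central double cover of $\pi_0\cC$.
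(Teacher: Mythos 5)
The paper itself offers no proof of this statement: Theorem \ref{stronglyfusion} is imported verbatim from \cite[Theorem B]{johnsonfreyd2020fusion}, so the only comparison available is with that cited proof, which is substantially more involved than your sketch. The genuine gap in your proposal is the pivot of Stage 1. The splitting $p\colon X\otimes X^* \to 1$ of the coevaluation does \emph{not} produce a monoidal functor $\End_\cC(X)\to\Omega\cC$: the projection onto the unit summand is not compatible with the algebra structure on $X\otimes X^*$ (the decategorified analogue is the linear projection $M_n(\bC)\to\bC$ onto the span of the identity, which is not an algebra homomorphism unless $n=1$). Moreover, even if a fibre functor $\End_\cC(X)\to\SVec$ did exist, it would not force an equivalence --- $\cat{SRep}(G)$ admits a fibre functor for every finite $G$ --- so your sentence ``the strongly super fusion hypothesis forces this functor to be an equivalence'' is not an argument but a restatement of the hard content of the theorem (endotriviality at every object, not just at the unit, which is what the paper deduces \emph{from} the theorem in the paragraph following it). There is a further circularity risk: your Frobenius--Perron count needs all simple summands of $X\otimes X^*$ to lie in the identity component, which essentially presupposes that $\pi_0\cC$ is a group --- part of what is being proved.

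A supporting claim in Stage 1 is also false in exactly the way that matters for the super case: while the indecomposable $\SVec$-module $1$-categories are indeed $\SVec$ and $\Vec = \mathrm{Mod}(\mathrm{Cliff}(1))$, \emph{both} are invertible, since $\mathrm{Cliff}(1)$ is Azumaya in $\SVec$ ($\mathrm{Cliff}(1)\otimes\mathrm{Cliff}(1)^{\mathrm{op}} \cong \mathrm{Cliff}(2) \cong \End(\bC^{1|1})$), giving $\mathrm{Pic}(\SVec)\cong\bZ_2$. This nontrivial Picard group is precisely the source of the Cheshire object $c$ and of the double-cover phenomenon the theorem asserts, so a Schur-type argument ``ruling out the $\Vec$ summand'' cannot succeed --- it would be in tension with the theorem's own conclusion. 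Relatedly, in Stage 3 the fermion $f\in\Omega\cC$ is a $1$-morphism of the unit and does not ``lift to an invertible object''; the kernel of $G\twoheadrightarrow\pi_0\cC$ is generated by the Cheshire object $c$, the second simple of the identity component $\Sigma\SVec$ (the condensate of $f$, as described in \S\ref{FermionicCase}), not by a lift of $f$. Granting invertibility, Stages 2 and 3 are repairable along standard lines: finiteness follows from fusion, and $\mathrm{Hom}_\cC(X,Y)\neq 0$ iff $X^*\otimes Y \in \{\mathbb{1},c\}$, with centrality of $c$ coming from its being a condensation descendant of the unit. But as it stands, Stage 1 --- the entire mathematical weight of the theorem --- is missing.
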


Since an invertible object always has the same endomorphisms as the identity, Theorem \ref{stronglyfusion} implies in particular that the endomorphisms of any indecomposable object in $\cC$ is equivalent to $\SVec$, a super version of the condition called ``endotriviality'' in \cite{douglas2018fusion}.

\subsection{Sylleptic and symplectic groups: bosonic case}\label{BosonicCase}

In any 5d topological order,  the surface operators have three ambient dimensions in which they can compose. Thus the fusion 2-category $\cC$ is 3-monoidal, aka \define{sylleptic}. 
The definition of sylleptic monoidal 2-category, which can be found in full in the appendix of \cite{schommerpries2014classification}, simplifies dramatically in the strongly fusion case.

To warm up, in this section we discuss the case of bosonic strongly fusion 2-categories, where sylleptic structures are classified by the Eilenberg--MacLane cohomology introduced in \cite{10.2307/1969702}. Indeed, suppose that $\cC$ is bosonic strongly fusion, meaning that it is a fusion 2-category with $\Omega\cC = \Vec$.
The bosonic case of Theorem~\ref{stronglyfusion} is 
\cite[Theorem~A]{johnsonfreyd2020fusion}, which says that the indecomposable objects in $\cC$ form a finite group $M$, equal to the group of components since $\cC$ is forced to be endotrivial. 

The full data of the monoidal structure on $\cC$ consists of: a tensor functor $\otimes$, given by the group law on $M$; an associator $\alpha_{x,y,z} : (x\otimes y)\otimes z \isom x \otimes (y\otimes z)$; and a pentagonator $\pi_{x,y,z,w}$
\begin{equation}\label{pentagon}
\begin{tikzcd}
                                                                                       & (w\otimes x) \otimes (y \otimes z) \arrow[rd, "\alpha "] &                                                                   \\
((w\otimes x) \otimes y) \otimes z \arrow[ru, "\alpha"] \arrow[d, "\alpha \otimes I"'] & \Uparrow \pi                                             & w\otimes (x \otimes (y \otimes z))                                \\
(w\otimes (x \otimes y)) \otimes z \arrow[rr, "\alpha"']                               &                                                          & w\otimes ((x \otimes y) \otimes z) \arrow[u, "I \otimes \alpha"']
\end{tikzcd}
\end{equation}
which must satisfy a certain equation that we will not reproduce in full. But by endotriviality, $\alpha$ is no data: there is up to isomorphism a unique equivalence $(x\otimes y)\otimes z \isom x \otimes (y\otimes z)$ for every triple of indecomposable object $(x,y,z)$. After trivializing $\alpha$, the equation for $\pi$ says simply that it is a 4-cocycle in ordinary group cohomology with coefficients in $\bC^\times$.
We will henceforth adopt the following notation. Given a group $M$ (Abelian if $n \geq 2$), we will write $M[n]$ for the Eilenberg--Mac Lane space more typically written $K(M,n)$, and $\H^k(-)$ without coefficients always means ordinary cohomology with $\bC^\times$ coefficients $\H^k(-;\bC^\times)$. To summarize the above discussion, we find that bosonic strongly fusion 2-categories with $\cC$ with $\pi_0 \cC = M$ are classified by
\begin{equation}\label{cohofusion}
  [\pi] \in \H^4_\gp(M) := \H^4(M[1];\bC^\times).
\end{equation}

Suppose $\cC$ is a monoidal 2-category with tensor bifunctor $\otimes$, associator $\alpha$, and pentagonator $\pi$. A \define{braiding} on $\cC$ consists of a natural (in both variables) equivalence $b_{x|y}:x \otimes y \to y \otimes x$,%
\footnote{We write the braiding as $b_{x|y}$ rather than $b_{x,y}$ to be consistent with later notation for Eilenberg--Mac Lane cocycles. Higher Eilenberg--Mac Lane cocycles are like AT\&T sales pitch: ``More bars in more places.''}
 together with \define{hexagonators} $R_{(x|-,-)}$ and $S_{(-,-|x)}$ that provide the monoidality of $b$:
\begin{equation}
\adjustbox{scale=0.75}{
\begin{tikzcd}[column sep = tiny]
                                                                    & (y \otimes x) \otimes z \arrow[r, "\alpha "]     & y \otimes (x \otimes z) \arrow[rd, "{b_{x|z}}"]       &                         \\
(x\otimes y)\otimes z \arrow[rd, "\alpha "] \arrow[ru, "{b_{x|y}}"] &       \hspace{15mm} \Downarrow                   & {\hspace{-15mm}R_{(x|y,z)}}                           & y \otimes( z \otimes x) \\
                                                                    & x \otimes (y \otimes z)  \arrow[r, "{b_{x|y z}}"] & (y \otimes z )\otimes x \arrow[l, phantom] \arrow[ru,"\alpha"] &                        
\end{tikzcd}
},
\quad
\adjustbox{scale=0.75}{
\begin{tikzcd}[column sep = tiny]
                                                                    & (z \otimes y) \otimes x \arrow[r, "{b_{zy|x}} "]     & x \otimes (z \otimes y) \arrow[rd, "\alpha"]       &                         \\
z\otimes (y\otimes x) \arrow[rd, "{b_{y|x}} "] \arrow[ru, "\alpha"] &       \hspace{15mm} \Downarrow                   & {\hspace{-15mm}S_{(z,y|x)}}                           & (x \otimes z) \otimes y \\
                                                                    & z \otimes (x \otimes y)  \arrow[r, "\alpha"] & (z \otimes x )\otimes y\,. \arrow[l, phantom] \arrow[ru,"{b_{z|x}}"] &                        
\end{tikzcd}}
\end{equation}
$R$ and $S$ must solve various equations. When $\alpha$ and $\pi$ are trivial, these equations say first that for each $x$, $R_{(x|-,-)}$ and $S_{(-,-|x)}$ are 2-cocycles\footnote{These are 3-cochains if we include the $x$ variable, but not ordinary 3-cocycles.}, and they furthermore assert:
\begin{equation}
\adjustbox{scale=0.8}{
\begin{tikzcd}[column sep = tiny]
                                                                                                             &                           & z\otimes y \otimes x \arrow[rrd, "{b_{y|x}}"]                                       &                           &                                              \\
y\otimes z \otimes x \arrow[rru, "{b_{y|z}}"]                                                                & &                                                                                   &              {\scriptstyle R_{(x|z,y)}\Rightarrow}      \      & z\otimes x \otimes y                         \\
y\otimes x \otimes z \arrow[u, "{b_{x|z}}"] \arrow[rrd, "{b_{y|x}}", swap] &         {\scriptstyle R_{(x|y,z)} \Leftarrow }                 &                                                                                  & & x\otimes z \otimes y \arrow[u, "{b_{x|z}}"']  \arrow[lluu, "{b_{x|zy}}"', bend left,swap]  \\
                                                                                                             &                          & 
x\otimes y\otimes z\,, \arrow[lluu, "{b_{x|yz}} \hspace{ 5mm} \cong", bend right,swap] \arrow[rru, "{b_{y|z}}"'] &                           &                                             
\end{tikzcd}
}
= 
\adjustbox{scale=0.8}{
\begin{tikzcd}[column sep = tiny]
                                                                                                             &                           & z\otimes y \otimes x \arrow[rrd, "{b_{y|x}}"]                                       &                           &                                              \\
y\otimes z \otimes x \arrow[rru, "{b_{y|z}}"]                                                                & {\scriptstyle S_{(y,x|z)} \Leftarrow } &                                                                                     &                           & z\otimes x \otimes y                         \\
y\otimes x \otimes z \arrow[u, "{b_{x|z}}"] \arrow[rrd, "{b_{y|x}}", swap] \arrow[rruu, "{b_{yx|z}}"', bend right] &                           &                                                                                     & {\scriptstyle S_{(x,y|z)}\Rightarrow} & x\otimes z \otimes y \arrow[u, "{b_{x|z}}"'] \\
                                                                                                             &                           & x\otimes y\otimes z\,, \arrow[rruu, "\cong \hspace{5mm} {b_{xy|z}}", bend left] \arrow[rru, "{b_{y|z}}"'] &                           &                                             
\end{tikzcd}
}
\label{sm5}
\end{equation}
\begin{equation}
\adjustbox{scale=.7}{
\begin{tikzcd}[column sep=tiny]
                                          &  & w\otimes y \otimes z \otimes x {} \arrow[rrdd, "b_{w|yz}" ] &  &    \\
                                          &  & \       \Uparrow S       &  &    \\
w\otimes x\otimes y\otimes z{} \arrow[rruu,"b_{x|yz}"] \arrow[rrrr,"b_{wx|yz}"] \arrow[rrdd,"b_{wx|y}"'] &  &                 &  & y\otimes z\otimes w\otimes x{} \\
                                          &  & \       R \Downarrow       &  &    \\
                                          &  &y\otimes w\otimes x\otimes z {} \arrow[rruu,"b_{wx|z}"']  &  &   
\end{tikzcd}}
=
\adjustbox{scale=.6}{
    \begin{tikzcd}[column sep = tiny]
                                                                                                                 &  &                                                                                   & w\otimes y\otimes z\otimes x \arrow[rddd, "{b_{y|w}}"'] \arrow[rrrddd, "{b_{w|yz}}"] &                                                      &  &                              \\
                                                                                                                 &  &                                                                                   &                                                                                      &                                                      &  &                              \\
                                                                                                                 &  & \Rightarrow R                                                                     &                                                                                      & \Leftarrow R                                         &  &                              \\
w\otimes x\otimes y\otimes z  \arrow[rr, "{b_{x|y}}"] \arrow[rrruuu, "{b_{x|yz}}"] \arrow[rrrddd, "{b_{wx|y}}"'] &  & w\otimes y\otimes x\otimes z   \arrow[ruuu, "{b_{y|x}}"'] \arrow[rddd, "{b_{w|y}}"] &  \rotatebox[origin=c]{270}{$\cong$}                                                                                      & y\otimes w\otimes z\otimes x \arrow[rr, "{b_{w|z}}"] &  & y\otimes z\otimes w\otimes x \\
                                                                                                                 &  & \Rightarrow S                                                                     &                                                                                      & \Leftarrow S                                         &  &                              \\
                                                                                                                 &  &                                                                                   &                                                                                      &                                                      &  &                              \\
                                                                                                                 &  &                                                                                   & y\otimes w\otimes x\otimes z\,. \arrow[ruuu, "{b_{x|z}}"] \arrow[rrruuu, "{b_{wx|z}}"'] &                                                      &  &                             
\end{tikzcd}
    }
    \label{sm4}
\end{equation}

The unlabeled isomorphisms are the naturality of $b$.
If $b$ is also trivial, then (\ref{sm5}) and (\ref{sm4}) simply say:
\begin{gather}
\label{sm5-simple}     R^{-1}_{(x|y,z)}\,R_{(x|z,y)}= S^{-1}_{(y,x|z)}\, S_{(x,y|z)}, \\
\label{sm4-simple}     R^{-1}_{(wx|y,z)}\,S_{(w,z|yz)}= R^{-1}_{(x|y,z)} \,R^{-1}_{(w|y,z)}\, S_{(w,x|y)}\, S_{(w,x|z)}\,.
\end{gather}
Suppose that we are in the bosonic strongly fusion case. Then $\alpha$ and $b$ are automatically trivial, but $\pi$ may not be.  In this case, the equivalent equations (\ref{sm4}) and (\ref{sm4-simple}), as well as the requirements that $R_{x|-,-}$ and $S_{-,-|x}$ be 2-cocycles, 
 receive corrections by $\pi$. (The  equivalent equations (\ref{sm5}) and (\ref{sm5-simple}) do not require corrections, because $\pi$ only appears when we need to coherently tensor four or more objects.) The full result is that $(\pi,R,S)$ are together the data of what is sometimes called an ``{Abelian cocycle},'' and what we will call a \define{braided cocycle}: they define a class in the Eilenberg--Mac Lane cohomology
\begin{equation}\label{cohobraided}
 [\pi,R,S] \in \H^4_{\mathrm{br}}(M;\bC^\times) := \H^5(M[2];\bC^\times).
\end{equation}
Finally, suppose that $\cC$ is a braided monoidal 2-category. A \define{syllepsis} $v$ for $\cC$ is an isomorphism $v_{x||y} : b^{\phantom{-1}}_{x|y} \overset\sim\Rightarrow b^{-1}_{y|x}$ for each $x,y$ such that the diagram
\begin{equation}\label{MultForV}
\adjustbox{scale=1.2}{
 \begin{tikzcd}[Rightarrow]
{b_{\ell|xy}} \arrow[d, "{v_{\ell,xy}}",Rightarrow] &  & {b_{\ell|x}\,b_{\ell|y}} \arrow[ll, "{R_{(\ell|x,y)}}"',Rightarrow] \arrow[d, "{v_{\ell||x} \,v_{\ell||y}}",Rightarrow] \\
{b^{-1}_{xy|\ell}}                       &  & {b^{-1}_{x|\ell}\,b^{-1}_{y|\ell}} \arrow[ll, "{S_{(x,y|\ell)}}",Rightarrow] \,. 
\end{tikzcd}}
\end{equation}
commutes. In the bosonic strongly fusion case where $\alpha$ and $b$ are trivial, $v$ enhances $(\pi,R,S)$ to a \define{sylleptic cocycle}, defining a class
\begin{equation}\label{cohosylleptic}
  [\pi,R,S,v] \in \H^4_{\mathrm{syl}}(M;\bC^\times) := \H^6(M[3];\bC^\times).
\end{equation}

In general, a theory with (only) grouplike $p$-spacetime dimensional objects with $q$-ambient dimensions (hence $p+q$ total spacetime dimensions) should be classified by  degree ($p+q+1$) cohomology of $M[q]$\footnote{When $p$ is large, the required cohomology theory is not ordinary cohomology. Indeed, any theory will have $k$-dimensional operators built by inserting decoupled $k$-dimensional topological theories, and for large enough $k$ there are nontrivial invertible $k$-dimensional topological field theories. For most purposes the presence of these decoupled operators does not affect the physics. However, these operators can arise as ``higher fusion coefficients'' for fusion of lower-dimensional operators. The result of this is that classifications by ordinary cohomology must be corrected in high dimensions.}.
The original paper \cite{10.2307/1969702} calculates the values of $\H^p(A[q];B)$ for small values of $p,q$ and arbitrary Abelian groups $A,B$. In particular, writing $\widehat{A} := \hom(A,\bC^\times)$ and $M_2 := \hom(\bZ_2, M)$,  
% \matt{draw the surfaces}.
%%we have:
%%$$ \theo{report all the pertinent values} $$
%In particular, 
sylleptic strongly fusion 2-categories $\cC$ with $\pi_0\cC=M$ are classified by
$$ \rH^6(M[3]; \rU(1)) \cong \widehat{M_2} \oplus \widehat{\textstyle \bigwedge^2 M}\,,$$ 
where $\bigwedge^2 M:= \frac{M \otimes M}{(m \otimes m)}$ denotes the alternating 2-forms on $M$.  We will now explain the meaning of these two summands $\widehat{M_2}$ and $\widehat{\bigwedge^2 M}$.  Further discussion can be found in \cite[\S2.1]{davydov2020braided}.

The summand $\widehat{M_2}$ measures the following \cite{JFRextensions}: 
{given an invertible surface operator $m \in M$, consider wrapping the surface operator around a Klein bottle. This requires choosing an equivalence $m \cong m^{-1}$, since the Klein bottle is not orientable. We have such an equivalence exactly when $m \in M_2$, in which case, by endotriviality, the equivalence is unique up to isomorphism. It also requires choosing a Pin structure on the Klein bottle; let's choose the nonbounding Pin structure. Then this Klein bottle wrapped with $m \in M_2$ will evaluate to some element of $\bC^\times$. This gives the map $M_2 \to \bC^\times$, or in other words the element of $\widehat{M_2}$. Since the Klein bottle embeds into $\bR^4 \subset \bR^5$, this class in $\widehat{M_2}$ depends only on the braiding data and not the sylleptic form.}

The summand $\widehat{\bigwedge^2 M}$ measures the following.
 Given surfaces with three ambient dimensions, then to ``braid" them means passing them around each other in a two-parameter family, topologically a two-sphere.  This procedure results in a phase factor that depends antisymmetrically on the inputs. 
 In terms of the data of a sylleptic 2-category, this antisymmetric pairing is given by  $\omega(x,y) = v_{x||y} - v_{y||x}$, where $v$ is a 2-cocycle and represents the sylleptic data.  This is because $v$ tells how the surfaces go from above to below one another in the four dimension when we consider the double braiding of two surfaces. At two locations, the surfaces switch places by going into the fifth dimension. This process is depicted in Figure \ref{SurfaceSyllepsis}.
\begin{figure}
    \centering
    \adjustbox{scale=1.3}{
      \begin{tikzpicture}
    \draw[fill=white,opacity=0.4,domain=0:180,variable=\x] plot ({2*cos(\x)},{.6*sin(\x)});
       \draw[fill=blue!20,opacity=.8](-2,0)--(-2,-1.5)  --(2,-1.5)--(2,0) ;
    \draw[line width =.5mm,color=blue,fill=red!20,opacity=.8,dashed,domain=0:180,variable=\x] plot ({2*cos(\x)},{.6*sin(\x)-1.5});
        \draw[fill=red!20,opacity=.4,domain=0:180,variable=\x] plot ({2*cos(\x)},{.8*sin(\x)-1.30});
    \draw[color=red!20,fill=red!20,opacity=.8](-2,-1.5)--(-2,-3)  --(2,-3)--(2,-1.5) ;
    \draw[line width=.5mm,color=white,opacity=2](-2,-1.5)--(2,-1.5) ;
    \draw[color=red!40,fill=red!20,opacity=2,domain=0:180,variable=\x] plot ({2*cos(\x)},{.6*sin(\x)-3});
    \draw[line width=.5mm,color=red ,fill=blue!20,opacity=.3,dashed,domain=180:360,variable=\x] plot ({2*cos(\x)},{.3*sin(\x)-1.50});
     %%%%%%FrontFaces
     \draw[color=blue!20,fill=white,opacity=2,domain=180:360,variable=\x] plot ({2*cos(\x)},{.4*sin(\x)});
     \draw[color=red!40,fill=red!20,opacity=0.8,domain=180:360,variable=\x] plot ({2*cos(\x)},{.4*sin(\x)-3});
     \draw[color=blue,fill=white,opacity=2,dashed,domain=0:180,variable=\x] plot ({2*cos(\x)},{.6*sin(\x)-1.5});
           \draw[color=blue ,fill=blue!20,opacity=.4,domain=180:360,variable=\x] plot ({2*cos(\x)},{.6*sin(\x)-1.50});
      \draw[color=red ,fill=white,opacity=2,dashed,domain=180:360,variable=\x] plot ({2*cos(\x)},{.3*sin(\x)-1.50});
    %\draw (0,0,0) node {$\bullet$};
    \draw (0,1) node {$Y$};
    \draw (0,-3.75) node {$X$};
    \draw (2,-1.5) node {$\bullet$};
    \draw (-2,-1.5) node {$\bullet$};
    \draw (-2.5,-1.5) node{$v^{-1}_{y||x}$};
    \draw (2.5,-1.5) node{$v_{x||y}$};
    \draw (.75,-1.57) node {\scalebox{.8}{$b_{x|y}$}};
    \draw (-.75,-1.2) node {\scalebox{.8}{$b^{-1}_{y|x}$}};
    \end{tikzpicture}}
    %\begin{tikzpicture}
    \caption{The two domed cylinders in red and blue represent two objects $X,Y \in \cC$ respectively, living in four dimensions. The purple coloured regions show the domes of the objects. 
    Initially, we can think of one object being above the other. The dashed lines indicate places where the two sheets pass over each other in the fourth dimension, with the colour indicating which is above.
    The two marked points show where one of the surfaces crosses over the other in the fifth dimension, changing the order of which surface is above and below.  The change in color of the dotted circle represents the fact that after the syllepsis, the object which was initially on top, is now on the bottom.}
    \label{SurfaceSyllepsis}
\end{figure}
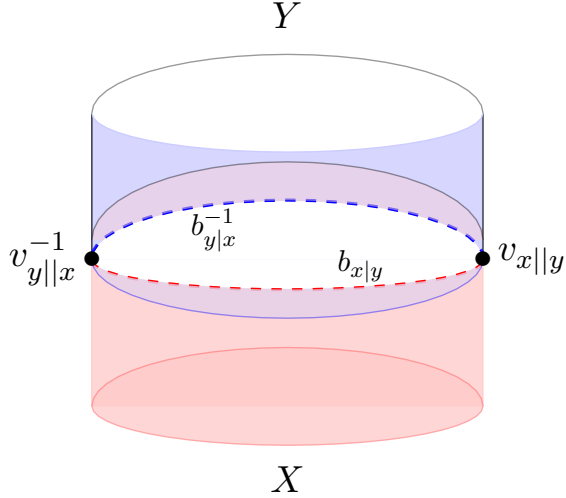
%%%%%%%%%%%%%%%%%%%%%%%%%%%%%%%%%%%
%%%%%%%%%%%%%%%%%%%%%%%%%%%%%%%%%

\subsection{Sylleptic and symplectic groups: fermionic case}\label{FermionicCase}
We turn now to the fermionic case, which is the main focus of this paper. As explained in \S2.1, we are specifically interested in sylleptic strongly fusion super 2-categories $\cC$. By definition, the line operators in such a 2-category are $\Omega\cC = \SVec$. The simple lines consist of the identity line $1$ and a fermion line $f$, corresponding to the super vector spaces $\bC^{1|0}$ and $\bC^{0|1}$ respectively. By Theorem~\ref{stronglyfusion}, the components $\pi_0\cC$ form a group~$M$. The identity component, and hence every component, contains two simple objects. This identity component is a copy of $\Sigma\SVec$, equivalent to the 2-category of superalgebras and their super bimodules. The identity object $\mathbb{1}$ corresponds to the superalgebra $\bC$, and the other simple object, which following \cite{Else_2017} we will call the \define{Cheshire object} $c$, corresponds to the superalgebra $\mathrm{Cliff}(1)$. It is a fun exercise that the self-braiding $c \otimes c \to c\otimes c$ is given by the fermion $f$ \cite{Gaiotto_2019}.
  The invertible operators in the identity component form the symmetric monoidal higher group $(\Sigma\SVec)^\times = \bC^\times[2].\{1,f\}[1].\{\mathbb{1},c\}[0]$ where the Postnikov extension data are given by $\Sq^2 : \{\mathbb{1},c\} \to \{1,f\}$ and $(-1)^{\Sq^2} : \{1,f\} \to \bC^\times$.

The collection of invertible operators in $\cC$ is  an extension of shape $(\Sigma\SVec)^\times.\pi_0\cC$. As in \S\ref{BosonicCase}, we will encode that $\cC$ is sylleptic by placing the (invertible) objects $\{\mathbb{1},c\}.\pi_0\cC$ in degree $3$. In other words, setting $M := \pi_0\cC$, we are interested in extensions of shape:
\begin{equation} \label{extension-shape} (\Sigma\SVec)^\times[3].M[3]. \end{equation}
The classification of arbitrary extensions of this shape is somewhat complicated. But we know one thing more: the fermion $f$, and hence also its condensate $c$, are invisible.  This is sometimes referred to as a local fermion, and any theory with this feature couples to spin structure and is equipped with a $\bZ_2$ fermion parity symmetry that induces a grading on the Hilbert space. In the language of group theory, one can think of this as saying that the extension (\ref{extension-shape}) is a ``central extension,'' and so classified by untwisted cohomology (of $M[3]$) with coefficients in $(\Sigma\SVec)^\times$.

Cohomology with coefficients in $(\Sigma\SVec)^\times$ is called \define{(extended) supercohomology} $\SH^\bullet$. The name is due to \cite{Wang:2017moj} given in the context of condensed matter and lattice constructions, but had appeared in the mathematics literature beforehand as a generalized cohomology theory.  See  \cite{Gaiotto_2019} for a more topological treatment.
By the Atiyah-Hirzebruch spectral sequence, $\SH^\bullet$ is built out of three ``layers" corresponding to the three homotopy groups of $(\Sigma\SVec)^\times$. The bottom (Majorana) layer records whether the group of simple objects $\{\mathbb{1},c\}.M$ is or is not a split extension. The second (Gu-Wen) layer records whether the isomorphism given by the braiding on two objects is even or odd; the fermion in particular braids with itself up to a sign rather than braiding trivially. The top layer records the associator data, i.e. a bosonic anomaly,  of a suitable bosonic shadow to the fermionic theory \cite{bhardwaj2017state}. There is a map $\rH^\bullet(M[3];\, \bC^\times) \to  \SH^\bullet(M[3]) $ corresponding to viewing a bosonic theory as a fermionic one.\footnote{It was predicted in \cite{Kapustin:2014dxa} that the classification of fermionic theories with symmetries in $d$ dimensions is given by (twisted) spin cobordism $\Omega^{d+1}_{\Spin}(M)$.  The Atiyah-Hirzebruch spectral sequence then allows us to compile the information in the first three layers to compute an approximation of spin cobordism, this recovers supercohomology.  In low dimensions supercohomology well approximates spin cobordism, but as the dimensions get higher, the approximation is more crude and more information coming from the deeper layers may be necessary. In our case however, the supercohomology approximation is exact: while spin cobordism has layers below the Majorana layer, these layers do not contribute to cohomology of $M[3]$ because of the Hurewicz theorem. }

\begin{proposition} \label{SH6ofM}
 For $M$ any arbitrary Abelian group, $\SH^6(M[3]) \cong \widehat{\bigwedge^2 M} = \hom\left(\bigwedge^2 M\, , \bC^\times\right)$, the space of alternating 2-forms.
\end{proposition}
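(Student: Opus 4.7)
The plan is to use the Atiyah--Hirzebruch spectral sequence for supercohomology, whose coefficient spectrum has three Postnikov layers $\pi_0 = \bZ_2 = \{\mathbb{1},c\}$, $\pi_1 = \bZ_2 = \{1,f\}$, $\pi_2 = \bC^\times$ glued by the $k$-invariants $\Sq^2$ and $(-1)^{\Sq^2}$. The starting input is the bosonic sylleptic classification recalled in Section~\ref{BosonicCase}: $\H^6(M[3];\bC^\times) \cong \widehat{M_2} \oplus \widehat{\bigwedge^2 M}$, with $\widehat{M_2}$ detected by the Klein-bottle evaluation of a 2-torsion surface operator and $\widehat{\bigwedge^2 M}$ by the spherical double braiding of two surface operators. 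The inclusion of $\bC^\times$ as the top Postnikov layer of $(\Sigma\SVec)^\times$ induces a bosonic-to-super comparison map $\H^6(M[3];\bC^\times) \to \SH^6(M[3])$, and my strategy is to show that (i) the $\widehat{\bigwedge^2 M}$ summand maps injectively, (ii) the $\widehat{M_2}$ summand is annihilated by the super $k$-invariants, and (iii) no new classes arise from deeper super-only layers.

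For the reduction steps: for $p$ odd, $\bZ_{p^k}[3]$ is $p$-local and the bosonic classification $\widehat{M_2}\oplus \widehat{\bigwedge^2 M}$ already vanishes on it, so by K\"unneth and filtered colimits it suffices to handle $M = \bigoplus_i \bZ_{2^{k_i}}$ finitely generated; Serre's theorem then provides the full polynomial-algebra description of $\H^*(\bZ_{2^k}[3];\bZ_2)$ as generated by admissible Steenrod operations of the fundamental class, with $\Sq^1\iota = 0$ when $k\geq 2$ (since the fundamental class lifts to $\bZ_4$-coefficients). The key computational claim is that the super $k$-invariants act nontrivially precisely on the Klein-bottle summand: physically, a local fermion inserted along the nonorientable cycle of the Klein bottle absorbs the Pin-anomaly that defines $\widehat{M_2}$, so this class dies. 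Algebraically, the corresponding AHSS differential, built from Pontryagin-square-type secondary operations composed with $(-1)^{\Sq^2}$, maps $\widehat{M_2}$ isomorphically into the Gu--Wen or Majorana layers, while $\widehat{\bigwedge^2 M}$---supported on a braiding that already fits in $\bR^4 \subset \bR^5$---lies in the kernel of all the relevant Steenrod operations and persists to $E_\infty$. Naturality in $M$ upgrades this to the claimed natural isomorphism.

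The main obstacle is turning the Klein-bottle/Pin argument into a rigorous cohomology-operation computation: one needs an explicit cochain-level model for the super $k$-invariants as secondary operations, and a verification that the induced AHSS differential on $\widehat{M_2}$ is an isomorphism onto its image. Useful sanity checks are the cyclic case $M = \bZ_{2^k}$, where $\widehat{\bigwedge^2 M} = 0$ forces $\widehat{M_2} = \bZ_2$ to die entirely, and $M = \bZ_2 \oplus \bZ_2$, where a single copy of $\widehat{\bigwedge^2 M} = \bZ_2$ must survive while $\widehat{M_2} = \bZ_2^2$ dies. A cleaner alternative, which I would pursue in parallel, is to exhibit an explicit surjection from the space of sylleptic fusion structures on the higher 3-group $M[3]$ to alternating forms on $M$ via the double-braiding-on-$S^2$ construction of Section~\ref{BosonicCase}, and separately show this map is injective using the symplectic reduction arguments to be developed in Section~\ref{BoundaryTheory}, bypassing the spectral-sequence bookkeeping altogether.
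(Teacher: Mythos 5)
Your skeleton is the paper's: the Atiyah--Hirzebruch spectral sequence $\rH^\bullet(M[3];\SH^\bullet(\pt)) \Rightarrow \SH^\bullet(M[3])$ with $d_2$ built from the $k$-invariants $\Sq^2$ and $(-1)^{\Sq^2}$. But two of your concrete steps fail. First, the reduction to $M = \bigoplus_i \bZ_{2^{k_i}}$ is false. It is true that both summands $\widehat{M_2}$ and $\widehat{\bigwedge^2 M}$ vanish on each odd cyclic factor $\bZ_{p^k}$, but K\"unneth produces cross terms between distinct odd cyclic factors, and these are nonzero: for $M = \bZ_p \oplus \bZ_p$ with $p$ odd, $\widehat{\bigwedge^2 M} \cong \bZ_p$. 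These odd-primary alternating forms are exactly the symplectic data the rest of the paper runs on (e.g.\ $M = B \times \widehat B$ with $B = \bZ_3$ in \S\ref{BoundaryTheory}), so discarding odd primes would yield $\SH^6(M[3]) = 0$ there and contradict Proposition~\ref{SH6ofM} itself. The correct odd-primary remark is different: the two $\bZ_2$-coefficient rows of the $E_2$ page vanish $p$-locally, so at odd primes $\SH^6$ simply agrees with $\rH^6(M[3];\bC^\times)$. The paper never performs a primary decomposition at all; it treats arbitrary $M$ uniformly by naturality along maps to and from $\bZ_2[3]$.

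Second, your mechanism for killing $\widehat{M_2}$ runs the differentials backwards. In this spectral sequence the bottom ($\bC^\times$) row supports no outgoing differentials, so the summand $\widehat{M_2} \subset E_2^{6,0}$ cannot ``map isomorphically into the Gu--Wen or Majorana layers''; it dies because it is the \emph{image} of the incoming $d_2 : E_2^{4,1} = \Ext(M,\bZ_2) \cong \widehat{M_2} \to E_2^{6,0}$, $X \mapsto (-1)^{\Sq^2 X}$, and a plan based on exhibiting ``Pontryagin-square-type secondary operations'' out of $\widehat{M_2}$ would simply fail. Moreover, proposition-level correctness also requires clearing the other total-degree-6 entries $(5,1)$ and $(4,2)$, which your item (iii) asserts without argument. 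The single lemma that does all of this in the paper --- and which is absent from your proposal, even though Serre's description of $\rH^\bullet(\bZ_2[3];\bZ_2)$ as a polynomial ring on $T, \Sq^1 T, \Sq^2 T, \dots$ is already on your table --- is that $\Sq^2 : \rH^3(M[3];\bZ_2) \to \rH^5(M[3];\bZ_2)$ is an isomorphism and the corresponding maps out of degree 4 are injective, proven by naturality: pull back along the classifying map $\mu : M[3] \to \bZ_2[3]$ of any $\mu \in \hom(M,\bZ_2)$, restrict along $m : \bZ_2[3] \to M[3]$ for $m \in M_2$, and use $\Sq^2 \Sq^1 T \neq 0$. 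Finally, your ``cleaner alternative'' is not available: proving injectivity of the double-braiding map via the symplectic-reduction arguments of \S\ref{BoundaryTheory} is circular, since those arguments take the classification of Proposition~\ref{SH6ofM} as their input.
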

\begin{proof}
We converge to the supercohomology by way of the Atiyah-Hirzebruch spectral sequence
$\SH^6(M[3]) \Leftarrow \rH^\bullet(M[3];\SH^\bullet(\pt))$. The entries on the $E_2$ page can be filled in from the formulas in \cite{davydov2020braided,10.2307/1969702}. 
This data assembles as:
\begin{gather}
E^{i,j}_2=\begin{array}{c|ccccccc}
     j\\
     \\
     \bZ_2& \hom(M,\bZ_2) & \Ext(M,\bZ_2) & \hom(M,\bZ_2) &  \ldots \\
     \bZ_2& \hom(M,\bZ_2) & \Ext(M,\bZ_2) & \hom(M,\bZ_2) & \widehat{M}_2\oplus \hom\left({\bigwedge^2 M}\,,\bZ_2\right)    \\
      \bC^\times & \widehat M & 0 &\hom(M,\bZ_2) & \widehat{M}_2\oplus \widehat{\bigwedge^2 M}\\
     \hline 
     & 3 & 4 & 5 & 6&i \,.
\end{array}
\end{gather}
The entries which include $\hom$ and $\Ext$ in degree three through five are all isomorphic to $\widehat{M}_2$, where $M_2$ denotes the 2-torsion of $M$, and the hat denotes Pontryagin duality.  Specifically, $\hom(M,\bZ_2) = (\widehat{M})_2$, and $\Ext(M,\bZ_2) = \widehat{M_2}$, which can be seen from the short exact sequence $\bZ_2 \overset{(-1)^x}{\longrightarrow}  \bC^\times \overset{x^2}{\longrightarrow}  \bC^\times$.
The $d_2$ differential are given by:
\begin{align}\label{firstd2}
    d_2 &: E^{i,2}_2=\rH^i(M[3]\,; \bZ_2) \to E^{i+2,1}_2=\rH^{i+2}(M[3]\,;\bZ_2) \quad&& X \mapsto \Sq^2 X \\ \label{secondd2}
     d_2 &: E^{i,1}_2=\rH^i(M[3]\,; \bZ_2) \to E^{i+2,0}_2=\rH^{i+2}(M[3]\,; \bC^\times) \quad&& X \mapsto (-1)^{\Sq^2 X} \,.
\end{align}
Notice that because we are really looking at Eilenberg-MacLane spaces in degree three, we do not need to consider the entries in degree lower than three due to the Hurewicz's theorem:
\begin{equation}
    \rH^{\bullet}(M[3]\,; A) = 0 \hspace{2mm} \text{for} \hspace{2mm} \bullet <3\, .
\end{equation}

We claim that $\Sq^2 : \rH^3(M[3];\,\bZ_2) \to \rH^5(M[3];\,\bZ_2)$ is an isomorphism. To see this, note first that $\rH^3(M[3];\,\bZ_2) \cong \hom(M; \bZ_2)$ by Hurewicz. Now given $\mu \in \rH^3(M[3];\,\bZ_2)$, we can construct the pullback $\mu^*: \rH^\bullet(\bZ_2[3];\,\bZ_2) \to \rH^\bullet(M[3];\,\bZ_2)$.  
The ring $\rH^\bullet(\bZ_2[3];\, \bZ_2)$ is a polynomial ring in the generators $T, \Sq^1 T, \Sq^2 T, \dots$ where $T$ has degree $3$. In particular, $\rH^3(\bZ_2[3];\,\bZ_2) = \{0, T\}$, with $\mu^*(T) = \mu$, and $\rH^5(\bZ_2[3];\,\bZ_2) = \{0, \Sq^2 T\}$.  Since $\Sq^2$ is natural, we have $\Sq^2(\mu^* T) = \mu^* (\Sq^2 T)$, confirming the claim. Thus the $d_2$ differentials $E_2^{3,1} \to E_2^{5,0}$ and $E_2^{3,2} \to E_2^{5,1}$ are isomorphisms.  The $d_2$ differentials supported in bidegrees $(4,1)$ and $(4,2)$ are injections by essentially the same argument. Namely, for each $m \in M_2 = \hom(\bZ_2,M)$,we can restrict $\widehat{M}_2 = \rH^4(M[3];\bZ_2)$ along the map $m^*: \rH^4(\bZ_2[3]; \bZ_2) \to \rH^4(M[3]; \bZ_2)$.  The only element in $\rH^4(\bZ_2[3]; \bZ_2)$ is $\Sq^1 T$, which is not annihilated by $\Sq^2$. Again by naturality, the $d_2$ from $E^{4,1}_2 \to E^{6,0}_2$ and $E^{4,2}_2 \to E^{6,1}_2$ are injections.  

All together, the $E_3$ page reads:
%In the degree we are after, the $E_3$ page converges to the $E_\infty$ page, given by:
\begin{gather}
   E^{i,j}_3= \begin{array}{c|ccccccc}
     j\\
     \\
     \bZ_2& 0 & 0 & * &  * \\
     \bZ_2& 0 & 0 & 0 &  *    \\
      \bC^\times & \widehat M & 0 & 0 &  \widehat{\bigwedge^2 M}\\
     \hline 
     & 3 & 4 & 5 & 6 &i\,.
\end{array}
\end{gather}
In particular in total degree $6$ the spectral sequence stabilizes on page $3$, with the only nonzero entry being $\widehat{\bigwedge^2 M}$ in bidegree $(6,0)$.
\end{proof}
\begin{rem}
 We note that $\rH^6(M[3];  \bC^\times) \simeq \widehat{M_2} \oplus \widehat{\textstyle \bigwedge^2 M}$ classifies 5d bosonic topological phases, but the $\widehat{M}_2$ is killed by a differential in the spectral sequence  for supercohomology. Thus a bosonic sylleptic form contains more information than its superization.
\end{rem}
Thus we find: 

\begin{theorem}\label{5dfermionic}
The set of fermionic (4+1)d topological orders with no lines is equal to the set of symplectic Abelian groups.
\end{theorem}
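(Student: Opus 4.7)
The plan is to assemble the theorem from the running discussion in Sections~\ref{BosonicCase}--\ref{FermionicCase} together with Proposition~\ref{SH6ofM}, and then to upgrade ``alternating form'' to ``symplectic form'' by invoking remote detectability.

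First I would unpack what a super (4+1)d topological order with no line operators looks like. By the condensation procedure reviewed at the start of \S\ref{5dTopOrder}, such an order is encoded by a super fusion 4-category $\cB$ with $\Omega^3\cB=\SVec$. Applying \cite[Theorem~5]{johnsonfreyd2020classification} as was already done in the text gives $\cB=\Sigma^2\cC$ for $\cC$ a strongly super fusion 2-category, and the three ambient directions of composition endow $\cC$ with a sylleptic monoidal structure. By Theorem~\ref{stronglyfusion}, every simple object of $\cC$ is invertible, and $M:=\pi_0\cC$ is a finite Abelian group.

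Next I would read off the classification of the sylleptic structure on $\cC$ from \S\ref{FermionicCase}: the invertible operators in $\cC$, placed in degrees so that $M$ sits in degree $3$, form a $(\Sigma\SVec)^\times[3].M[3]$ extension, and the local fermion hypothesis makes this a central extension, classified by the extended supercohomology $\SH^6(M[3])$. Proposition~\ref{SH6ofM} then identifies this group with $\widehat{\bigwedge^2 M}$, i.e.\ the group of alternating 2-forms $\omega:M\otimes M\to\bC^\times$. Unwinding the cocycle data recovers the pairing $\omega(x,y)=v_{x\|y}\cdot v_{y\|x}^{-1}$ coming from the double-braid/syllepsis picture of Figure~\ref{SurfaceSyllepsis}.

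The remaining point, and the one requiring actual work, is to promote ``alternating'' to ``symplectic,'' i.e.\ to show that the pairing $\omega$ is nondegenerate precisely when $\cC$ satisfies the remote-detectability axiom built into the definition of a topological order (triviality of the $\rE_1$-center of the associated $\cB$). In one direction, if $\omega$ is degenerate then any $m\in M$ in the radical braids trivially with every surface operator, and because we have already stripped away all line operators, $m$ cannot be detected by anything in $\cB$; this produces a nontrivial central object and so $\cB$ is not a topological order. Conversely, given a nondegenerate $\omega$ on any finite Abelian $M$, the sylleptic strongly super fusion 2-category it determines has all its invertible surface operators mutually detectable by double braiding, and every codimension-$\geq 2$ operator is an invertible surface or a condensation descendant thereof, so $\cB=\Sigma^2\cC$ has trivial center. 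This is the main obstacle, since it requires combining the description of the $\rE_1$-center of $\cB$ in terms of sylleptic braidings on $\cC$ with the endotriviality imposed by Theorem~\ref{stronglyfusion}; the former is where the geometric content of ``a codimension-$2$ operator braids with a codimension-$2$ operator in five dimensions'' enters. Once that equivalence is established, $\{(M,\omega)\}=\{\text{symplectic finite Abelian groups}\}$ is precisely what Proposition~\ref{SH6ofM} yields under the nondegeneracy constraint, proving the theorem.
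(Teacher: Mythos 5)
Your proposal is correct and takes essentially the same route as the paper: read off the classification of sylleptic strongly super fusion 2-categories from Proposition~\ref{SH6ofM} as alternating forms $\omega \in \widehat{\bigwedge^2 M}$, then use remote detectability --- triviality of the symmetric centre, i.e.\ no nontrivial object $x$ with $v_{x||-} = v_{-||x}^{-1}$ --- to force nondegeneracy. The paper's proof compresses the last step into a one-line assertion identifying the trivial-centre requirement with nondegeneracy of the pairing $\langle x,y\rangle = v_{x||y}\,v_{y||x}$, whereas you spell out both directions of that equivalence; this is elaboration of the same argument rather than a different approach.
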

For the definition of \textit{symplectic Abelian group} we refer the reader to \S\ref{moritatrivial}.
\begin{proof}
The principle of remote detectability for topological orders ensures that there are no invisible operators (trivial centre).  In detail, the ``trivial centre" requirement for a sylleptic fusion 2-category is that its symmetric centre --- its full subcategory on those objects $x$ for which $v_{x||-} = v_{-||x}^{-1}$ --- should be trivial. As explained at the end of \S2.2, the class in $\widehat{\bigwedge^2 M}$ precisely records the antisymmetric pairing $\langle x,y\rangle = v_{x||y} v_{y||x}$. When applied to the group of surfaces, it means that the symplectic pairing is nondegenerate.  
\end{proof}
\begin{rem}
 To make contact with lower dimensions, consider the familiar case of bosonic 3d topological orders.  These are given by modular tensor categories (MTC). In the Abelian case, with $M$ a group, the braiding data of the MTC data is determined by a class in $\rH^4(M[2])$.  This is isomorphic to the group of quadratic functions on $M$. The full braiding of lines is given by the symmetric pairing 
 \begin{align}
     M \otimes M &\to  \bC^\times\\ \notag
     a\otimes b  &\mapsto \frac{q(a+b)}{q(a)\,q(b)}\,,
 \end{align}
 where $q$ is a quadratic function.
 In 3d this is a one-parameter family in which the lines pass around each other in a circle and we get a phase factor because  it is a one-dimensional motion, and one dimension lower than line operator is a phase.  Furthermore, this phase depends symmetrically on the two inputs, and here ``nondegenerate" means that the symmetric pairing is nondegenerate.
 
 If $M_2$ is trivial, then $\rH^4(M[2]; \rU(1)) \cong \mathrm{Sym}^2 \widehat M$ by completing the square. In general, the map $\rH^4(M[2]) \to \mathrm{Sym}^2 \widehat M$ has kernel. This is analogous to the kernel $\widehat M_2$ of the map $\rH^6(M[2]; \rU(1)) \to \SH^6(M[3]) = \widehat{\bigwedge^2 M}$. And indeed a similar analysis as in Proposition \ref{SH6ofM} shows that this kernel dies when going to fermionic theories and $\SH^4(M[2]) \cong  \mathrm{Sym}^2 \widehat M$.
\end{rem}

\section{5d Topological order from the boundary}\label{BoundaryTheory}

\subsection{Symplectic Reduction to Isotropic Subspaces}\label{moritatrivial}

The symplectic form $\omega$ on $M$ gives $M$ the structure of a symplectic Abelian group. 
\begin{definition}
A \define{symplectic Abelian group} is an Abelian group $G$ together with an isomorphism $\omega: M \to \widehat{M}$, with $\widehat{M} = \hom(M,\bC^\times)$ such that  $\omega(g,g) = 1$  for every $g \in M$.%\footnote{This implies an alternating feature, $\omega =  \widehat{\omega}^{-1}$.}
\end{definition}
{This definition implies an alternating feature,  $\omega =  \widehat{\omega}^{-1}$. Recall that by definition $\bigwedge^2 M = \frac{M \otimes M}{ m \otimes m}$, so a map $\omega : M \to \widehat{M}$ is the same data as a map $\omega : M \otimes M \to \rU(1)$. This map solves $\omega(g,g)=1$ for all $g$ iff it factors through $\bigwedge^2 M$.}
\begin{example}
An example of a symplectic Abelian group is when $M$ is a product of groups $B \times \widehat B$ with $\omega((b_1,f_1),(b_2,f_2))=f_1(b_2) \cdot f_2(b_1)^{-1}$.

If $M$ is a cyclic group then $M$ does not admit a symplectic form. Call the generator for the cyclic group $t$, then $\omega(t,t) = 1$ but $\omega(t^a,t^b) = 1^{ab} = 1$, and $\omega$ is not an isomorphism. 
\end{example}

Suppose M is a symplectic Abelian group and $N \subset M$ is a subgroup. The \emph{symplectic orthogonal} $N^\perp$ is the subgroup $\{m \in M \text{ s.t. } \omega(m,n) = 1 \text{ for all } n \in N \}$. It is the subgroup corresponding to $\widehat{M/N} \subset \widehat{M}$ under the isomorphism $\omega : M \cong \widehat{M}$. From this description, we see that $|M| = |N| \times |N|^\perp$. A subgroup $L \subset M$ is \emph{Lagrangian} if $L = L^\perp$ as subgroups of $M$. Thus $L$ is Lagrangian exactly when $\omega|_L$ is trivial and $|L| = \sqrt{|M|}$. 
A \emph{Lagrangian splitting} of $M$ is a direct sum decomposition $M = L \oplus L'$ where both $L$ and $L'$ are Lagrangian. The symplectic form on $M$ then identifies $L' \cong \widehat{L}$.
\begin{proposition}[Darboux theorem for finite groups]\label{davydovlemma}
Every symplectic finite Abelian group admits a Lagrangian splitting.\footnote{While it is true that all such $M$ admit a Lagrangian subgroup, it is not the case that any Lagrangian at all fits into the sequence $\widehat{L} \hookrightarrow M \to L$, see \cite[Example 5.4]{davydov2007twisted}}
\end{proposition}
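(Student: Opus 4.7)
The plan is to prove the result by induction on $|M|$, after first reducing to the case when $M$ is a $p$-group. For distinct primes $p \ne q$, any homomorphism $M_p \to \widehat{M_q}$ is zero (their exponents are coprime), so under $\omega : M \to \widehat{M}$ the primary components $M_p$ are pairwise orthogonal, and the primary decomposition $M = \bigoplus_p M_p$ is automatically an orthogonal direct sum of symplectic subgroups. It therefore suffices to produce a Lagrangian splitting of each $M_p$ separately, and without loss of generality I will assume $M$ is a nontrivial finite abelian $p$-group.

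The crux is to peel off a hyperbolic plane $H \cong \mathbb{Z}/p^a \oplus \mathbb{Z}/p^a$. First I would pick $m \in M$ of maximal order $p^a$. Since $\omega$ is nondegenerate, the character $\omega(m,-) : M \to \mathbb{C}^\times$ has order $p^a$, so some $n \in M$ satisfies $\omega(m,n) = \zeta$ for a primitive $p^a$-th root of unity $\zeta$; a short calculation using $\omega(m,n)^{p^b} = \omega(m, p^b n)$ forces $n$ also to have order $p^a$, and using that $\omega$ is alternating one checks $\langle m\rangle \cap \langle n\rangle = 0$. Thus $H := \langle m,n\rangle \cong \mathbb{Z}/p^a \oplus \mathbb{Z}/p^a$ is a symplectic subgroup with its obvious Lagrangian splitting $\langle m\rangle \oplus \langle n\rangle$.

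Next I would establish the orthogonal decomposition $M = H \oplus H^\perp$: nondegeneracy of $\omega|_H$ means every character $H \to \mathbb{C}^\times$ is of the form $\omega(h,-)|_H$ for a unique $h \in H$, so each $x \in M$ admits a unique decomposition $x = h + (x-h)$ with $x-h \in H^\perp$, while $H \cap H^\perp$ is the radical of $\omega|_H$, which vanishes. Then $(H^\perp, \omega|_{H^\perp})$ is a symplectic finite abelian group of strictly smaller order, so by induction $H^\perp = L_1 \oplus L_2$ is a Lagrangian splitting, giving
\[
  M = (\langle m\rangle \oplus L_1) \oplus (\langle n\rangle \oplus L_2),
\]
where each summand is isotropic (being a sum of two orthogonal isotropic pieces) and of cardinality $\sqrt{|H|\cdot|H^\perp|} = \sqrt{|M|}$, hence Lagrangian.

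The key step, and the place I expect any real work to live, is the extraction of the hyperbolic subgroup $H$; once $H$ is in hand the orthogonal decomposition and the induction are essentially formal. The only other point deserving vigilance is the prime $p = 2$: there one must invoke the full alternating hypothesis $\omega(g,g) = 1$, which is stronger than skew-symmetry, to conclude that $\langle m\rangle$ and $\langle n\rangle$ are isotropic. Since this is exactly how the paper has defined a symplectic abelian group, no extra argument is needed.
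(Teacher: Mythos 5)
Your proof is correct and takes essentially the same route as the paper's (itself adapted from Davydov): reduce to $p$-primary components, extract a hyperbolic plane $\bZ/p^a \oplus \bZ/p^a$ spanned by an element $m$ of maximal order and a partner $n$ of the same order pairing with $m$ to a primitive $p^a$-th root of unity, split it off orthogonally, and induct. The differences are cosmetic: you treat a general prime where the paper writes out $p=2$, you obtain the full-order pairing directly from nondegeneracy of $\omega(m,-)$ rather than by pairing against $x^{2^{a-1}}$, and you justify $M = H \oplus H^\perp$ by the character-projection argument instead of the counting identity $|M| = |N|\,|N^\perp|$.
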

The following proof is essentially given in \cite[Lemma 5.2]{davydov2007twisted}.
\begin{proof}
Every finite Abelian group canonically factors as a direct sum of subgroups for different $p$, and the symplectic form cannot mix different primes.  We thus reduce to the case where the group in consideration $M$ has order $p^k$ for some prime $p$.  We give the $p=2$ case for clarity, and the proof generalizes for other primes. Pick an element $x \in M$ of maximal order, say $2^a$. Then $x^{2^{a-1}}$ is nontrivial and we choose an element $y$ such that the pair $\omega(x^{2^{a-1}},y)\neq 1$. 
We use the fact that $x^{2^{a-1}}$ is order 2 and so by inspecting $\omega(x^{2^{a-1}},y) = \omega(x,y)^{2^{a-1}}$, which is itself also order 2, we see that $y$ has order at least $2^a$. 
But $a$ was maximal, so we have found two subgroups, generated by $x$ and $y$, both of order $2^a$. We note that these two groups are transverse because an alternating form vanishes on a cyclic subgroup. Let $N$ denote the subgroup generated by $x$ and $y$. It is a product of cyclic groups $\bZ_{2^a} \times \bZ_{2^a}$, which are themselves each Lagrangians in $N$. The restriction of $\omega$ to $N$ is the canonical split pairing $\omega(x,y)$, of $x$ pairing with $y$.  By construction, $\omega|_N$ is nondegenerate. Thus $N$ and $N^\perp$ are transverse ($N \cap N^\perp=0$), so $M = N \oplus N^\perp$. By induction of the previous procedure, $N^\perp$ can be further split into something Lagrangian, therefore $M$ has a Lagrangian splitting. 

\end{proof}
\subsection{Lagrangian subgroups as boundary theories}
We now turn to investigate the boundary (3+1)d theory of a 5d theory which also has only surfaces, and make some relations with the bulk.  The boundary is a braided strongly fusion 2-supercategory $\cL$ with objects being surfaces that have an $L$ group fusion rule. The braiding $\beta$ is a class in $\SH^5(L[2])$, which in this case, antisymmetrically pairs objects.\footnote{The analogue of this class for a (1+1)d boundary to a (2+1)d bulk would be a class $\alpha \in \SH^3(L[1])$ that provides associator information regarding the lines in the (1+1)d theory.}  We can think of the bulk (4+1)d theory as the \textit{sylleptic centre} of $\cL$ denoted by $\cZ_{(2)}(\cL)$, where the objects have fusion rule $M$ with a sylleptic structure $\omega$. 
Since all 1-morphisms are trivial in the strongly fusion case, all the data is encoded in $R,S$ and of particular importance is the class in $\SH^5(L[2])$ encoding the braiding.
\begin{definition}
Let $\cL$ be a braided monoidal 2-category. An object in the \define{sylleptic centre} $\cZ_{(2)}(\cL)$ is a pair $(x, v_{x||-})$.  A 1-morphism from $(x, v_{x||-})$ to $(x', v_{x'||-})$ is a one morphism $f: x \to x'$ in $\cL$ such that the following diagram commutes for all $y\in \cL$: 
\begin{center}
\adjustbox{scale=.9}{
    \begin{tikzcd}
x \otimes y \arrow[rrdd, "{b_{x|y}}"'] \arrow[rrrr, no head, Rightarrow] \arrow[ddd] &                       &                                                                &                       & x \otimes y \arrow[ddd] \\
                                                                                     &                       & {\Downarrow v_{x||y}}                                           &                       &                         \\
                                                                                     & {\Rightarrow b_{f|y}} & y\otimes x \arrow[rruu, "{b_{y|x}}"'] \arrow[ddd, shift right] & {\Rightarrow b_{y|f}} &                         \\
x'\otimes y \arrow[rrrr, no head, Rightarrow] \arrow[rrdd, "{b_{x'|y}}"']            &                       &                                                                &                       & x' \otimes y            \\
                                                                                     &                       & {         \hspace{10 mm}\Downarrow v_{x'||y}}                                 &                       &                         \\
                                                                                     &                       & y\otimes x' \arrow[rruu, "{b_{y|x'}}"']                        &                       &                        
\end{tikzcd}
}
\end{center}
where the 2-morphism on the back face is the identity.
The two morphisms are defined in the same manner as in $\cL$.

\end{definition}
%\matt{define this centre better}
\begin{lemma}\label{lemma:nolines}
 If $\cL$ is a strongly fusion braided fusion 2-category, then $\cZ_{(2)}(\cL)$ contains no lines.
\end{lemma}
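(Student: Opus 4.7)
The plan is to identify lines in $\cZ_{(2)}(\cL)$ with 1-endomorphisms of the monoidal unit and then to show that the fermion $f \in \Omega\cL = \SVec$ does not satisfy the centrality condition when tested against the Cheshire object. A line of $\cZ_{(2)}(\cL)$ is, by the definition preceding the lemma, a 1-morphism from the unit $(\mathbb{1}, v^{\mathrm{triv}}_{\mathbb{1}||-})$ to itself, i.e.\ a $\phi \in \Omega\cL$ together with the data making the commutative square hold. Since $\cL$ is strongly super fusion, Theorem~\ref{stronglyfusion} gives $\Omega\cL \cong \SVec$, so up to isomorphism the only candidate simple lines are the trivial line and the fermion $f$.

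With source and target equal to the unit, and with $v^{\mathrm{triv}}$ the canonical trivialization of the unitor-braidings $b_{\mathbb{1}|y}$ and $b_{y|\mathbb{1}}$, the cubical diagram collapses to the requirement that, for every $y \in \cL$, the naturality 2-cell $b_{\phi|y} : \phi \otimes y \Rightarrow y \otimes \phi$ agree with its transpose via $v^{\mathrm{triv}}$. Equivalently, $\phi$ must braid symmetrically with every object of $\cL$, including objects outside of $\Omega\cL$. The trivial line satisfies this tautologically, so the whole lemma reduces to showing that $\phi = f$ fails against at least one object.

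I would pair $f$ against the Cheshire object $c$ living in the identity component $\Sigma\SVec \subset \cL$. The key input is the fact recalled just before Theorem~\ref{stronglyfusion}: the self-braiding of $c$ realizes the fermion $f$. Equivalently, in the presentation of $\Sigma\SVec$ by superalgebras and super bimodules, $c = \mathrm{Cliff}(1)$ and whiskering by $c$ exchanges the two simple summands of $\End(c) \cong \SVec$; this is exactly the Gu--Wen layer of $(\Sigma\SVec)^\times$ recorded by the Postnikov invariant $\Sq^2 : \{\mathbb{1},c\} \to \{1,f\}$ from Section~\ref{FermionicCase}. Dragging the fermion line past $c$ therefore picks up a sign, so the naturality 2-cell $b_{f|c}$ equals $-\mathrm{id}$ rather than the identity, and the centrality square fails at $(\phi,y) = (f,c)$.

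Hence $f$ does not define a line of $\cZ_{(2)}(\cL)$; only the trivial line survives, so $\Omega\cZ_{(2)}(\cL) \simeq \Vec$, which is the precise meaning of ``no lines.'' The main technical obstacle is making the sign computation intrinsic: one must verify that the naturality 2-cell in the center diagram is exactly the image in $\Omega$ of the $k$-invariant $\Sq^2$ of $(\Sigma\SVec)^\times$, rather than some related coherence datum that could cancel the sign. Once that identification is made, the nontriviality of $\Sq^2$ on $c$ is precisely the obstruction to $f$ lifting.
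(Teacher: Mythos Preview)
Your argument has a genuine error: the fermion $f$ \emph{does} lift to the sylleptic centre, so your proposed obstruction vanishes. The Postnikov invariant $\Sq^2 : \{\mathbb{1},c\} \to \{1,f\}$ you cite records the self-braiding $b_{c|c}$ of \emph{objects}; it says nothing about the naturality $2$-cell $b_{f|c}$ that actually enters the cube condition for a $1$-morphism of the unit in $\cZ_{(2)}(\cL)$. Since $\cL$ is $\SVec$-linear, the fermion $f\in\End_\cL(\mathbb{1})$ acts centrally on every object: $f\otimes\mathrm{id}_y$ and $\mathrm{id}_y\otimes f$ are the same odd endomorphism of $y$, so the naturality $2$-cells $b_{f|y}$ and $b_{y|f}$ are identities and the cube commutes. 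A clean sanity check is $\cL=\Sigma\SVec$ itself: this is already symmetric monoidal, so $\cZ_{(2)}(\Sigma\SVec)=\Sigma\SVec$ and $\Omega\cZ_{(2)}(\Sigma\SVec)=\SVec$, which contains $f$. Your conclusion $\Omega\cZ_{(2)}(\cL)\simeq\Vec$ is therefore false in the very first example.

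Relatedly, you have misread the target statement. In the super setting of this paper, ``no lines'' means $\Omega=\SVec$ (cf.\ \S\ref{5dTopOrder}, where ``no nontrivial line operators in $\cB$'' is glossed as $\Omega^3\cB=\SVec$). The paper's proof does not try to exclude $f$ at all. Its point is the one you skipped past: the compatibility constraint on a $1$-morphism in $\cZ_{(2)}(\cL)$ is the request for a $3$-cell filling the cube, and in a $2$-category the only $3$-cell is an identity. Hence the constraint is a \emph{property}, not additional structure, and $\Omega\cZ_{(2)}(\cL)$ sits as a full sub-$1$-category of $\Omega\cL$ (this is exactly the content of the Remark following the Lemma). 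Strong fusion then gives $\Omega\cL=\SVec$, which is the desired ``no lines''.
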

\begin{proof}
Consider the identity object $(\mathbb{1},v_{\mathbb{1}||-})$ of $\cZ_{(2)}(\cL)$, where $v_{\mathbb{1}||x}$ is the following isomorphism:
\begin{center}
\adjustbox{scale=1.2}{
\begin{tikzcd}
 \mathbb{1}\otimes x \arrow[rr, "{b_{\mathbb{1}|x}}", bend left] \arrow[rr, "{b^{-1}_{x|\mathbb{1}}}"', bend right] & {\Downarrow v_{\mathbb{1}||x}} & x\otimes \mathbb{1}
\end{tikzcd}}.
\end{center}
A priori, $v$ could be any $x$-dependent $ \bC^\times$ number satisfying a 1-cocycle relation i.e. $v\in \widehat{L}$.
But since $\mathbb{1}$ is the identity object, $b_{\mathbb{1}|x}$ and $b_{x|\mathbb{1}}$ are both trivialized, and the identity object of the centre is the one such that $v$ is also trivialized, so we take $v_{\mathbb{1}||-} =1$.
Now consider morphisms of the identity object, which is a morphism from $\mathbb1 \to \mathbb{1}$, or id$_{\mathbb{1}}$.  Then, we have the following 3-cell filling:
\begin{center}
\adjustbox{scale=1.2}{
\begin{tikzcd}
\mathbb{1} \otimes x \arrow[ddd] \arrow[rr, "{b^{-1}_{x|\mathbb{1}}}"', bend right] \arrow[rr, "{b_{\mathbb{1}|x}}", bend left] & {\Downarrow v_{\mathbb{1}||x}} & x \otimes \mathbb{1} \arrow[ddd] \\
                                                                                                      &                      &                          \\
                                                                                                      &                      &                          \\
\mathbb{1} \otimes x \arrow[rr, "{b^{-1}_{x|\mathbb{1}}}"', bend right] \arrow[rr, "{b_{\mathbb{1}|x}}", bend left]              & {\Downarrow v_{\mathbb{1}||x}} & x \otimes \mathbb{1}  \,,            
\end{tikzcd}}
\end{center}
where the vertical maps are just identity maps.  But, because we are in the 2-category, the only 3-cell is the identity.
\end{proof}

\begin{rem}
 More generally, if $\cB$ is any braided monoidal 2-category, then $\Omega \mathcal{Z}_{(2)}(\cB)$ is a full sub-1-category of $\Omega\cB$. However, the analogous statements for $\mathcal{Z}_{(1)}$ and for 3-categories fail. For definiteness, Lemma \ref{lemma:nolines} is to spell out the details of the case we care about.
\end{rem}

\begin{proposition}\label{triviallyBraidedSylCenter}
 The sylleptic center of a trivially braided fusion 2-category $\mathcal{L}$ is $\widehat{\mathcal{L}} \times \mathcal{L}$ and the sylleptic form is the canonical one. 
\end{proposition}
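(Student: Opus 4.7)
The plan is to identify objects of $\mathcal{Z}_{(2)}(\mathcal{L})$ explicitly as pairs $(x,\chi) \in L \times \widehat{L}$, where $L := \pi_0\mathcal{L}$, and then to read off the induced sylleptic form from the coherence diagram~(\ref{MultForV}). By Theorem~\ref{stronglyfusion}, $\mathcal{L}$ is endotrivial, and its indecomposable objects form a (possibly central-double-covered) copy of $L$, so all relevant 2-automorphism groups collapse to $\mathbb{C}^\times$.

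First I would observe that an object of $\mathcal{Z}_{(2)}(\mathcal{L})$ is a pair $(x, v_{x||-})$ with $v_{x||y} \colon b_{x|y} \Rightarrow b_{y|x}^{-1}$. Since the braiding on $\mathcal{L}$ is trivial, each $b_{x|y}$ reduces to $\mathrm{id}_{x \otimes y}$, and endotriviality forces $v_{x||y} \in \mathbb{C}^\times$. Unpacking the diagram~(\ref{MultForV}) with $b$, $R$, $S$ all trivial collapses it to the scalar identity $v_{x||yz} = v_{x||y}\, v_{x||z}$, so $v_{x||-}$ is a character $\chi \in \widehat{L}$; conversely, every such $\chi$ is realized. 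Combining this with Lemma~\ref{lemma:nolines} and Theorem~\ref{stronglyfusion} applied to $\mathcal{Z}_{(2)}(\mathcal{L})$ itself shows that the group of components of the sylleptic center is $L \times \widehat{L} = \mathcal{L} \times \widehat{\mathcal{L}}$.

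It remains to identify the sylleptic form on $L \times \widehat{L}$. By the construction of the center, the syllepsis $V$ between $(x_1,\chi_1)$ and $(x_2,\chi_2)$ is assembled from the half-syllepsis datum $v_{x_1||x_2} = \chi_1(x_2)$, so as in the discussion at the end of \S\ref{BosonicCase} the associated alternating pairing is
\[ \omega\bigl((x_1,\chi_1),(x_2,\chi_2)\bigr) = V_{(x_1,\chi_1)||(x_2,\chi_2)}\, V^{-1}_{(x_2,\chi_2)||(x_1,\chi_1)} = \chi_1(x_2)\,\chi_2(x_1)^{-1}, \]
which is exactly the canonical split symplectic form on $L \times \widehat{L}$ appearing in the example following the definition of symplectic Abelian group. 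The main obstacle will be bookkeeping rather than conceptual: one must carefully track how trivializing $b$, $R$, and $S$ collapses each piece of the sylleptic axiom and how the half-syllepses $\chi_i$ paste together in the center, so as to confirm that exactly the canonical form (and not a nontrivial twist of it) appears.
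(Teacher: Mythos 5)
Your proof is correct and takes essentially the same route as the paper: trivializing $b$, $R$, $S$ collapses the diagram in \eqref{MultForV} to the scalar identity $v_{x||yz} = v_{x||y}\,v_{x||z}$, so each half-syllepsis is a character and the objects of $\cZ_{(2)}(\cL)$ are exactly the pairs in $\cL \times \widehat{\cL}$. If anything you are slightly more thorough than the paper's own terse proof, which stops at identifying the objects, whereas you also extract the pairing $\chi_1(x_2)\,\chi_2(x_1)^{-1}$ and match it to the canonical split symplectic form of the example in \S\ref{moritatrivial}.
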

\begin{proof}
{The trivial braiding indicates that $[\pi,R,S]$ in \eqref{cohobraided} are trivial. As a consequence, the diagram in \eqref{MultForV} reduces down so that $v$ satisfies the equation $v_{\ell|| xy} = v_{\ell || x} v_{\ell ||y}$, thus $v_{\ell||-}$ is a homomorphism $\cL \to \bC^\times$. The object $(x\,,v_{x||-})$ in the sylleptic centre is therefore an element of $(\cL\,, \widehat{\cL})$. }
\end{proof}
%\begin{equation}
 %   \cZ(\cL(L,\, \beta)) = \cA(M,\,\omega)\,.
%\end{equation}
\begin{example}
For clarity let us work bosonically in this example instead of using supercategories.  Suppose $M$ admits $L$ as a lagrangian, and take $L=\bZ_3$. A particularly simple class of braided fusion 2-category is $\cL=$ 2$\mathbf{Vec}^\beta[\bZ_3]$, where $\beta \in \rH^5(\bZ_3[2])$. A computation shows that $\rH^5(\bZ_3[2])=\widehat{\bigwedge^2 \bZ_3} = 0$, which means the only category is 2$\mathbf{Vec}[\bZ_3]$ with the sylleptic centre
\begin{equation}\label{2centre}
    \cZ_{(2)}(2\mathbf{Vec}[\bZ_3])=   \widehat{2 \mathbf{Vec}[\bZ_3]} \times 2\mathbf{Vec}[\bZ_3].
\end{equation}
  In general, if $L$ was a group such that $\beta \neq 0$, then $\cZ_{(2)}(\cL) = \widehat{\cL} \,.\, {\cL}$\,, a nontrivial extension of the boundary category.  In terms of the groups, \eqref{2centre} implies that $M = \widehat{L} \times {L}$, where $\widehat{L} = M /L^\perp = M / L$.  Therefore, $M$ fits into the short exact sequence $\widehat{L} \hookrightarrow M \twoheadrightarrow L$. 
\end{example}
%%%%%%%%%%%%%%%%5

%%%%%%%%%%5
\begin{rem}
The centre gives the corresponding Djikgraaf-Witten (DW) theory for the boundary, with anomaly given by a class in $\SH^6(M[3])$. 
The act of going from the sylleptic centre to the boundary can be done by first ``forgetting" the sylleptic structure, and then applying a Dirichlet boundary condition aka a braided map from a braided monoidal centre to the boundary.  The objects in the kernel of this map are precisely the ``Wilson lines" of the DW theory.  The boundary condition contains not only a condensation $\widehat{L}$ but furthermore  a trivialization of $\omega|_{\widehat{L}}$, which is given by a class in $\SH^5(\widehat{L}[3])$. 
%which is given by a class in $\rH^5(\widehat{L}[3])=0$, so this is no data.  
\end{rem} 
We can also ask which boundary theories can be lifted to the bulk; this is the equivalent of finding a splitting of the bulk to boundary map.  The objects in the image of the splitting map are the~``\,'t Hooft lines" of the DW theory.  A priori there can be an obstruction to the lifting \cite{davydov2020braided}, which means that the lines in the bulk do not split neatly as a direct sum of ``electric" and ``magnetic" lines.  There exists an obstruction for a braided 4-cocycle 
    $\{\pi(-,-,-,-), R_{(-|-,-)}, S_{(-,-|-)}\}$
 to have sylleptic structure given by  
 \begin{equation}
      \theta: \rH^5(L[2]) \to \Ext(L, \widehat{L})\,,
 \end{equation}
 with the kernal of this map precisely given by $\widehat{L}_2$. The map  $\rH^6(L[3])\to \rH^5(L[2])$ maps between the two $\widehat{L}_2$ subgroups, with $\rH^6(L[3])$ attained from $\rH^6(M[3])$ via a restriction map.  The subgroup of $\widehat{L}_2$ in $\rH^5(L[2])$ contains information regarding the data of $\pi, R, S$.  The remainder of the group is braiding information that cannot be lifted to being sylleptic.  
There is furthermore a map from $\rH^6(L[3]) \to \SH^6(L[3])$ that surjects onto $\widehat{\bigwedge^2 L}$. This is summarized in the following diagram:
\begin{center}
    \begin{tikzcd}
{\rH^6(L[3]) \simeq \underbrace{\widehat{L}_2}\oplus \widehat{\bigwedge^2 L}} \arrow[rr, two heads] \arrow[d, hook,shift left=4] &  & {\SH^6(L[3]) \simeq \widehat{\bigwedge^2 L}} \arrow[d] \\
{\rH^5(L[2])} \arrow[rr]                                                                                &  & {\SH^5(L[2])}.                                         
\end{tikzcd}
\end{center}
The map from $\SH^6(L[3]) \to \SH^5(L[2])$ is therefore the zero map as composition of the left vertical map and the horizontal map gives zero; we then have:
\begin{proposition}
 Only the fermionic boundary theories with trivial braiding can be extended, in a way such that multiplication data is consistent with the lift to the bulk, to a sylleptic form.
\end{proposition}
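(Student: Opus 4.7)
The plan is to read the proposition off the commutative square displayed immediately above it: the statement amounts to the assertion that the right-hand vertical arrow $\SH^6(L[3])\to\SH^5(L[2])$ is the zero map. Once this is shown, the claim about boundary theories is just a translation. Namely, for $\omega\in\SH^6(M[3])=\widehat{\bigwedge^2 M}$, the restriction along $L\hookrightarrow M$ gives a class in $\SH^6(L[3])$, whose image in $\SH^5(L[2])$ is, by construction, the braiding class of the induced boundary theory; vanishing of the map then forces that braiding to be trivial.

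First I would confirm that the square commutes. Both the obstruction map $\theta$ and the superization $\rH^\bullet\to\SH^\bullet$ come from universal constructions: $\theta$ is transgression along the path-loop fibration $L[2]\to\ast\to L[3]$, while superization is induced by the canonical map of coefficient spectra $\bC^\times\to (\Sigma\SVec)^\times$. Because coefficient change and transgression are both natural, the square commutes on the nose.

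Next I would verify that the left-then-bottom composition $\rH^6(L[3])\xrightarrow{\theta}\rH^5(L[2])\to\SH^5(L[2])$ is zero. As noted in the paragraph preceding the diagram, $\theta$ has kernel $\widehat{L}_2$, and under the decomposition $\rH^6(L[3])\cong\widehat{L}_2\oplus\widehat{\bigwedge^2 L}$ its image is contained in the $\widehat{L}_2$ summand of $\rH^5(L[2])$. I would then compute $\SH^5(L[2])$ with an Atiyah--Hirzebruch spectral sequence directly parallel to the one in Proposition~\ref{SH6ofM}, but one delooping down. The same $d_2=\Sq^2$ differential used there kills the relevant $\widehat{L}_2$ entry on the Gu--Wen row before it can survive to $E_\infty$, so the horizontal map $\rH^5(L[2])\to\SH^5(L[2])$ kills the image of $\theta$. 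Combined with the fact that the top arrow $\rH^6(L[3])\twoheadrightarrow\SH^6(L[3])\cong\widehat{\bigwedge^2 L}$ is surjective, commutativity of the square forces the right-hand vertical to vanish, giving the proposition.

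The main obstacle is the commutativity step. Although the universal-coefficient formulation makes it essentially automatic, one does have to bookkeep that $\theta$ as defined via the sylleptic condition on $(\pi,R,S,v)$ genuinely coincides with transgression in the Atiyah--Hirzebruch filtration, so that the two copies of $\widehat{L}_2$ in $\rH^6(L[3])$ and $\rH^5(L[2])$ are matched up by the same $\Sq^2$ mechanism that will kill them downstream. Once this identification is pinned down, the rest of the argument is a routine spectral-sequence calculation.
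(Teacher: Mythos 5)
Your proof is correct and is essentially the paper's own argument: the paper establishes the proposition precisely via that commuting square, noting that the left-then-bottom composite vanishes because the suspension $\rH^6(L[3])\to\rH^5(L[2])$ lands in the $\widehat{L}_2$ subgroup, which is exactly what the $d_2 = (-1)^{\Sq^2}$ differential in the supercohomology Atiyah--Hirzebruch spectral sequence (as in Proposition~\ref{SH6ofM}, one deloop down) kills under $\rH^5(L[2])\to\SH^5(L[2])$, so surjectivity of the top map forces the right vertical map to be zero. Two harmless slips worth fixing: the map you call $\theta$ is the suspension $\rH^6(L[3])\to\rH^5(L[2])$, whereas the paper's $\theta:\rH^5(L[2])\to\Ext(L,\widehat{L})$ is the obstruction map whose kernel is $\widehat{L}_2$ (the suspension instead kills the $\widehat{\bigwedge^2 L}$ summand, so your kernel claim is inconsistent with your --- correct and actually load-bearing --- image claim), and the relevant $d_2$ emanates \emph{from} the Gu--Wen row at bidegree $(3,1)$ and kills the $\widehat{L}_2$ sitting in the bottom $\bC^\times$ row at $(5,0)$, rather than killing an entry on the Gu--Wen row itself.
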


\begin{rem}
 It is possible that all surfaces on the boundary can be lifted, but not necessarily canonically.  In the case of a 3d $\bZ_3$ DW theory with nontrivial anomaly, this has a Dirichlet boundary condition where the lines obey $\bZ_3$ fusion rule.  The bulk however has lines that obey $\bZ_9$ fusion rule.  Any line on the boundary can be lifted, but there is no way to do this in a way that is compatible with the tensor product. The lines on the boundary cube to the trivial line, but lifting it to the bulk means that the cube is nontrivial.
\end{rem}

\subsection{Morita trivial 5d phases}

 If $M$ admits $L$ as a Lagrangian subspace then the corresponding 5d topological order upon symplectic reduction is Morita equivalent to the trivial theory.  More succinctly this is know as being \textit{Morita trivial}.  This reduction procedure is depicted physically in Figure \ref{cartooncondensate}.
  
  \begin{figure}[ht]
\centering
    \begin{tikzpicture}[thick]
    
        % Dimensions
        \def\Depth{5}
        \def\DepthTwo{5}
        \def\Height{1.5}
        \def\Width{2.5}
        \def\Sep{1}        
        
        % 3d Manifold on double segment
        \coordinate (O) at (0,0,0);
        \coordinate (A) at (0,\Width,0);
        \coordinate (B) at (0,\Width,\Height+3.5);
        \coordinate (C) at (0,0,\Height+3.5);
        \coordinate (D) at (\Depth,0,0);
        \coordinate (E) at (\Depth,\Width,0);
        \coordinate (F) at (\Depth,\Width,\Height+3.5);
        \coordinate (G) at (\Depth,0,\Height+3.5);
       % \draw[black] (O) -- (C) -- (G) -- (D) -- cycle;% Bottom Face
        %\draw[black] (O) -- (A) -- (E) -- (D) -- cycle;% Back Face
       % \draw[black, fill=blue!20,opacity=0.8] (O) -- (A) -- (B) -- (C) -- cycle;% Left Face
        \draw[black,  left color=blue!30,
  right color=blue!60,
  middle color=red!20,] (D) -- (E) -- (F) -- (G) -- cycle;% Right Face
        %\draw[black] (C) -- (B) -- (F) -- (G) -- cycle;% Front Face
        %\draw[black] (A) -- (B) -- (F) -- (E) -- cycle;% Top Face
        \draw[below] (\Depth, \Width-\Width/2+.25, \Height+1) node{$L$};
        \draw[midway] (\Depth/2,\Width-\Width/2-.25,\Height/2) node {$M$};
        \draw[midway] (\Depth/2+4.5,\Width-\Width/2-.25,\Height/2) node {$M \sslash L:= L^\perp/L$};
        
    \end{tikzpicture}
    \caption{The wall is braided fusion 2-category with objects in $L^\perp$, separating the original theory $\cA$ from the vacuum.  Similar to the case of quantum Hamiltonian reductions, the wall is a bimodule for the two categories on either side.}
    \label{cartooncondensate}
\end{figure}
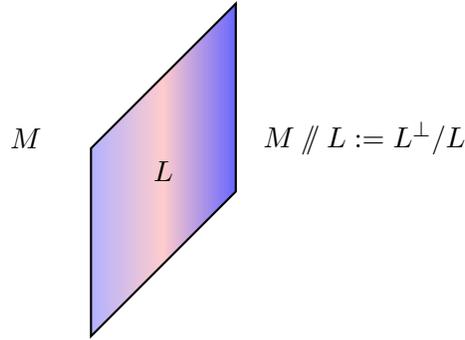

  \begin{example}
Consider in (1+1)d the category $\cI$ given by $\mathbf{Vec}^{\alpha=0}[\bZ_2]$, where $\alpha \in \rH^3(\bZ_2[1])$ is the trivial associator.  Then the (2+1)d bulk theory is $\cT =
\cZ(\mathbf{Vec}[\bZ_2]) = \mathbf{Vec}[\bZ_2] \times \mathbf{Vec}[\bZ_2] $.  Condensing out a $\bZ_2$ subgroup from $\cT$ amounts to the reduction $\left(\bZ_2 \times \bZ_2 \right) \sslash \bZ_2 = \bZ^\perp_2/\bZ_2=\{*\}$.  Physically, this is equivalent to taking the (2+1)d Toric code and condensing out the $m$ or $e$ particle.  The lines left ``unscreened" are in $\bZ^\perp_2$, and another identification by $\bZ_2$ gives the trivial theory.
  \end{example}

Theorem \ref{5dfermionic} relates 5d theories to symplectic Abelian groups and by Proposition \ref{davydovlemma}, we see that:
\begin{proposition}\label{trivial5dTopOrder}
 All 5d super topological orders are Morita trivial.
\end{proposition}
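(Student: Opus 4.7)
The plan is to assemble three ingredients that are already in place: the line-condensation reduction of Section~2.1, the classification of Theorem~\ref{5dfermionic}, and the Darboux theorem for finite symplectic Abelian groups (Proposition~\ref{davydovlemma}). The strategy is to first reduce an arbitrary super fusion 4-category $\cA$ to its strongly-super-fusion core of surface operators, then use a Lagrangian subgroup to exhibit a gapped boundary to the vacuum.

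\textbf{Step 1: Reduce to the line-free case.} Given a super (4+1)d topological order $\cA$, we choose a super fibre functor $F: \Omega^3\cA \to \SVec$ (which exists by \cite{deligne2002}) and perform the deequivariantization described in Section~2.1 to obtain a Morita-equivalent super fusion 4-category $\cB = \End_\cA(\cM)$ with $\Omega^3\cB = \SVec$. Since Morita equivalence is transitive, it suffices to prove that $\cB$ admits a gapped boundary to the vacuum. By the results cited from \cite{johnsonfreyd2020classification}, $\cB = \Sigma^2 \cC$ for a sylleptic strongly super fusion 2-category $\cC$, and by Theorem~\ref{5dfermionic} the pair $(\pi_0\cC,\omega)$ is a symplectic finite Abelian group $(M,\omega)$.

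\textbf{Step 2: Pick a Lagrangian and reduce.} By Proposition~\ref{davydovlemma} we can choose a Lagrangian splitting $M = L \oplus \widehat{L}$. The Lagrangian subgroup $L \subset M$ is isotropic, $\omega|_L = 1$, so the collection of surface operators labelled by $L$ forms a \emph{braided} (rather than merely sylleptic) strongly super fusion sub-2-category $\cL \subset \cC$, with trivial braiding inherited from the vanishing of $\omega$ on $L$. Moreover $\cL$ contains only the trivial line (since $\Omega\cL \subset \Omega\cC = \SVec$), so $\cL$ has no additional anomaly data beyond the group $L$ itself. We then condense $\cL$: categorically, this produces a gapped codimension-$1$ wall inside $\cB$ whose far side is the centraliser theory associated to $L^\perp/L$.

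\textbf{Step 3: Identify the condensed theory with the vacuum.} Because $L$ is Lagrangian we have $L^\perp = L$, so $L^\perp/L = 0$ and the residual symplectic Abelian group is trivial. By Theorem~\ref{5dfermionic} the corresponding super topological order with no lines and trivial group of surfaces is the vacuum $\Sigma^3\SVec$. Thus the condensation wall is a gapped boundary between $\cB$ and the vacuum, i.e. $\cB$ is Morita trivial, and composing with the Morita equivalence $\cA \simeq \cB$ from Step~1 gives the required gapped boundary of $\cA$.

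\textbf{Main obstacle.} The genuinely non-trivial input is the existence of the Lagrangian, which is precisely Proposition~\ref{davydovlemma} and is already proved. Beyond that, the only content is verifying that condensing an isotropic subgroup of $M$ in a sylleptic strongly super fusion 2-category yields the sylleptic strongly super fusion 2-category associated to $L^\perp/L$ with its induced symplectic form. This is a direct computation in the Eilenberg--Mac Lane / extended supercohomology classification $\SH^6(-[3])$ of Proposition~\ref{SH6ofM}: restricting $\omega \in \widehat{\bigwedge^2 M}$ along $L^\perp \hookrightarrow M$ and quotienting by $L$ gives the induced form on $L^\perp/L$, which vanishes when $L$ is Lagrangian. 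No further obstruction can appear, because in the strongly super fusion setting there is no higher associator data to keep track of beyond the class in $\SH^6(M[3])$.
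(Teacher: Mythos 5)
Your proposal is correct and follows essentially the same route as the paper: condense the lines via the super fibre functor of \S2.1, invoke Theorem~\ref{5dfermionic} to identify the resulting line-free theory with a symplectic finite Abelian group $(M,\omega)$, and condense a Lagrangian subgroup supplied by Proposition~\ref{davydovlemma}, so that the symplectic reduction $L^\perp/L$ of \S3.3 is trivial and the condensation wall is a gapped boundary to the vacuum. The only cosmetic difference is that you pass through a full Lagrangian splitting $M = L \oplus \widehat{L}$, whereas the paper remarks that a Lagrangian subgroup alone suffices since $L = L^\perp$ already forces $L^\perp/L = \{*\}$.
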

 A 5d phase which is not Morita trivial has boundary conditions that are necessarily gapless, an immediate consequence is:
 \begin{corollary}
All 4d fermionic boundaries can be gapped.
 \end{corollary}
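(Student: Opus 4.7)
The plan is to verify that this corollary is nothing more than a rephrasing of Proposition \ref{trivial5dTopOrder} in boundary language, so I would spend most of the writeup making the bulk-boundary dictionary explicit rather than introducing new ideas. Recall that a 4d fermionic theory comes equipped with a 5d bulk super topological order $\cA$ encoding its gravitational anomaly; the statement that the 4d theory is \emph{inherently gapless} means precisely that $\cA$ admits no gapped boundary to the vacuum, i.e.\ $\cA$ is Morita nontrivial. Thus the corollary is logically equivalent to the assertion that every 5d super topological order admits a gapped topological boundary condition, which is exactly Proposition \ref{trivial5dTopOrder}.

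Concretely, I would spell out the construction of the gapped boundary for an arbitrary 5d super TO $\cA$ in two steps. First, apply the line-condensation Morita equivalence from \S\ref{5dTopOrder}: choose a super fibre functor $F:\Omega^3\cA\to\SVec$ (which exists by Deligne's theorem in the super setting) and induce to obtain $\cA\simeq\cB$ with $\Omega^3\cB=\SVec$. By the structure theorem for strongly super fusion 2-categories (Theorem \ref{stronglyfusion}), $\cB=\Sigma^2\cC$ where $\cC$ is sylleptic strongly super fusion, and by Theorem \ref{5dfermionic} its group of components $M$ carries a symplectic form $\omega$. Second, apply Proposition \ref{davydovlemma} to produce a Lagrangian $L\subset M$ and perform the symplectic reduction depicted in Figure \ref{cartooncondensate}: the resulting interface is a braided fusion 2-category supported on $L^\perp$, and because $L=L^\perp$ the reduced theory $L^\perp/L$ is the vacuum.

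The composite of these two gapped interfaces (condensing lines, then condensing the Lagrangian of surfaces) is the desired gapped boundary from $\cA$ to the vacuum, and its existence says exactly that any 4d fermionic theory sitting on the boundary of $\cA$ may be deformed, via this condensation, to a gapped phase. I do not expect a genuine obstacle: the only point that requires mild care is checking that the composition of gapped interfaces is itself gapped, which follows from the calculus of condensations in \cite{gaiotto2019condensations}. With those ingredients in hand, the proof reduces to a single sentence citing Proposition \ref{trivial5dTopOrder}.
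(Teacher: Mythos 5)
Your proposal is correct and follows essentially the same route as the paper: the corollary is read off as the boundary-language restatement of Proposition \ref{trivial5dTopOrder}, whose proof is exactly the two-step chain you describe (condense lines via the super fibre functor to reach a strongly fusion theory with symplectic group $M$ as in Theorem \ref{5dfermionic}, then condense a Lagrangian supplied by Proposition \ref{davydovlemma}). Your extra care about composing the two gapped interfaces via \cite{gaiotto2019condensations} is a reasonable explicitness the paper leaves implicit, but introduces no new content.
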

While there are 4d fermionic gapless theories, by introducing the appropriate interactions we can introduce a gap and hence there is no robust gapless phase.
 We now present the reverse story and the way of reconstructing a theory from the vacuum.  We will show that every fermionic 5d topological order can be built non-canonically by gauging a one-form symmetry $\widehat{L}[1]$, and a zero-form symmetry $G$, both acting on the vacuum.  If the set of lines, $\Omega^3\cA$, is super Tannakian, then $G = \Aut{(F)}$ and $\Omega^3 \cA \cong \mathbf{SRep}(G)$.  The first step of condensing out the lines can be ``undone" by gauging the group $G$ which acts on the group $M$ of surfaces.  The symplectic form associated with $M$ is now a $G$-equivariant class $\tilde{\omega} \in \rH^6(M[3]/G)$.
 
  After condensing the lines, there is a similar map $C: \Omega^2 \cA \to 2\SVec$ which tells how to condense surfaces by choosing Lagrangian subspaces.  Gauging by the dual group $\widehat{L}$ which acts on the vacuum then undoes this procedure.  For Abelian group, gauging by a group or the dual group is always possible by the notion of ``electromagnetic-duality".  The important point to stress is that the choice of Lagrangian subgroup $L$ from $M$ was not canonical, and so doing the gauging by $\widehat{L}$ is also not canonical.  In contrast, the zero-form group $G$ is canonically determinded based on the lines of $\cA$. \begin{rem}
     This two step procedure can not necessarily be combined to a one step condensation by a ``2-group" symmetry $\widehat{L}[1].G$. We take for example the 5d toric code, with a $G = \bZ_3$ action that permutes the three strings. A nontrivial extension of $\widehat{L}[1]$ by $G$ will spoil the duality between switching the electic and magnetic lines.
  \end{rem}

  \begin{example}
We give an analogous story by considering the (2+1)d Toric code. This is only analogous because the theory is not a symplectic Abelian group, rather the pairing is symmetric.
We choose the set $\mathcal{R} = \{1,e\}$ to be Tannakian from the set $\{1,e,m,f\}$ of all the lines. As an $\mathcal{R}$ module, the Toric code is $\mathcal{R} \oplus m \mathcal{R}$. The map $F$ takes $R$ and condenses it to the vacuum.  This forms a gapped (1+1)d boundary where, as an $\mathcal{R}$ module, the lines $\{1,m\}$ live.  The group $G$ is the group generated by $\{1,m\}$, as can be seen when we consider the fact that a zero-form symmetry in (1+1)d is sourced by lines.
\end{example}

\begin{rem}
Since $M$ is a group of surface operators which are codimension-3 it defines a two-form symmetry.  This group has an anomaly that is precisely the symplectic form.  For a general isotropic $N \subset M$ we can consider gauging the $N$-symmetry.  The importance of being isotropic is to ensure that that the symmetry is non-anomalous and can be gauged. By gauging the symmetry we build a gapped domain wall between the original theory $M$ and a new theory given by $M \sslash N$.  In the gauging procedure, $N$ screens out those operators in $M$ which do not commute with $N$, and so the unscreened operators are $N^\perp$. But also the gauging procedure identifies the operators in $N$.  The result is that the new procedure is described by the \textit{symplectic reduction} $N^\perp/N$.\footnote{For an associative algebra $A$, gauging by the action of a connected and simply connected Lie group $G$ is also called \textit{quantum Hamiltonian reduction}.}
We note that $M \sslash N$
 itself is naturally symplectic by defining $\omega([a],[a']) = \omega(a,a')$  where $[a],[a']$ are classes in $N^\perp / N$, i.e. $a,a' \in N^\perp$, and they are defined up to shifting by $b,b' \in N$. 
 If $N = L$ is Lagrangian, $L^\perp/L = \{*\}$, and so we do not have to assume that $L$ participated in a Lagrangian splitting to show Morita triviality in proposition \ref{trivial5dTopOrder}.

\end{rem}

\section{Bosonic 5-dimensional Topological Orders}\label{bosonic5d}
The passage from bosonic to super topological orders is much like going from $\mathbb{R}$ and extending to $\bC$, its algebraic closure.
Consider a time reversal symmetry $\bZ^T_2$ that acts $\bC$-antilinearly and squares to the identity.  Working with an algebra $A$ of operators over the complex numbers with a $\bZ^T_2$  symmetry is the same as working over the real numbers.  The $\bZ^T_2$ \textit{descends} $A$ into $A_{\mathbb{R}}$, an $\mathbb{R}$ algebra, so that $A = A_\mathbb{R} \otimes \bC$. In the same spirit as the 0-categorical case, there is a way to 1-categorically extend $\textbf{Vec}$ to $\SVec$, where the latter is ``algebraically closed"\cite{Johnson_Freyd_2017}.  

A bosonic topological order $\cA$ is equipped with an action of the categorified Galois group $\Gal(\SVec/\mathbf{Vec})=\bZ^F_2[1]$, 
and \textit{Galois descent} says that the algebra of a bosonic higher category can be considered as the algebra of a $\bZ^F_2[1]$-equivariant higher supercategory.    
As remarked in Section \ref{5dTopOrder}, the fibre functor $F$ may not allow for complete condensation of the lines if we are working bosonically.  If the lines are Tannakian i.e. $\mathbf{Rep}(G)$ then we can condense out all the lines, the problem then reduces to the analogous problem discussed in the previous sections, with the symplectic form a class in $\rH^6(M[3])$.  If the lines are $\mathbf{Rep}(G,z)$, where $z$ is a central element of order two, then it is always possible to condense to only $\{1, f \}$, i.e. $\SVec$.  Furthermore, in a 5d bosonic theory not only are there surface operators, but there are nontrivial 3d ``membrane'' operators.  The surfaces operators still form a group under fusion by \cite[Theorem A]{johnsonfreyd2020fusion}; in this dimension the surfaces and lines can always unlink.  But now  either of the lines $\{1,f\}$ can wrap membranes, each detecting the other.  This data compiles into a bosonic 3-category $\cA$, with $\pi_0\cA = \bZ^F_2 $.  The ``magnetic membrane" is the unique invertible object in the nontrivial component that enacts the $\bZ^F_2$ one-form symmetry and will square to something in the identity component. The whole 3-category is describable by an extension \begin{equation}
    \bC^\times[5].\underset{\{1,f\}}{\underbrace{\bZ_2}}[4].\underset{\text{surfaces}}{\underbrace{\left(\bZ_2.M\right)}[3]}.\bZ_2^F[2]\,;
\end{equation}
$\bC^\times[5]$ means ``four-form $\bC^\times$ symmetry" the $\bZ_2$ in surfaces is given by $\{\mathbb{1},c\}$, which are the two simple objects in $\Sigma \SVec$ as stated before in \S\ref{FermionicCase}, with the caveat that now $c^2 \cong c\oplus c$. The fibre 
\begin{equation}
    \bC^\times[5].\bZ_2[4].(\bZ_2.M)[3] = (\bC^\times[5].\bZ_2[4].\bZ_2[3]).M[3]
\end{equation}
is the 2-category of surfaces, and the base $\bZ^F_2[2]$ are the two components of the 3-category.
We can make a simplification of the fibre as follows.  Any surface in $s \in M$ actually corresponds to two surfaces $s_1$ or $s_2$, being off from each other by the $c$.  But because we have the magnetic brane, $\mathscr{M}$, we can act with this brane on either of the surfaces.  The intersection of $\mathscr{M}$ with $s_1$ or $s_2$ is either the line $1$ or $f$, however we know that $\mathscr{M}$ acting on $c$ gives $f$.  Therefore, it is possible to identify which $s_{1,2}$ is the one that is also ``charged" with $c$.  This gives us the freedom to always choose the ``neutral" line, and so the term $\bZ_2[3]$ can be ignored.  Left with only the surfaces in $M$, we may condense them all out via the procedure in \S\ref{moritatrivial}.  We are left with only having to understand the $\bZ^F_2[2]$ objects.

The fermionic Witt group inherits an action by $\bZ^F_2[1]$ due to the fact that the spectrum \footnote{Spectrum can be substituted interchangeably with the term ``generalized cohomology theory", which was used in the introduction.} \\
${\SW}^\bullet = \left(\Sigma^{n-1} \SVec \right)^\times$.  $\mathcal{W}^\bullet$ is then the fixed-point spectrum of $\bZ^F_2[1]$ via categorified Galois descent \cite{johnsonfreyd2020classification}.  Therefore the cohomology of $\mathcal{W}^\bullet(\pt)$ is given by the twisted $\SW^\bullet$-cohomology, $\SW^\bullet(\bZ^F_2[2])$, of the space $\bZ^F_2[2] = \rB(\bZ^F_2[2])\,$.  We compute this twisted cohomology by the following Atiyah-Hirzebruch spectral sequence:
\begin{equation}
    \rH^i(\bZ^F_2[2]; \SW^j(\pt)) \Rightarrow \SW^{i+j}(\bZ^F_2[2]) = \mathcal{W}^{i+j}(\pt)\,.
\end{equation}
The homotopy groups of $\SW^\bullet(\pt)$ in low degrees are given by 
\begin{align}
    \SW^0(\pt) &=\bC^\times \,,\quad  \SW^1(\pt) =\bZ_2 \,, \quad \SW^2(\pt) =\bZ_2\, \\\notag 
    \SW^3(\pt) &=0\,,\quad \SW^4(\pt) =\SW\,,\quad \SW^5(\pt) =0\,,\quad \SW^6 (\pt)=0\,.
\end{align}
In degree four, $\SW$ known as the \textit{fermionic Witt group} gives the set of (2+1)d super topological orders modulo gapped interfaces.  Another way to think about this group is that it gives the anomalies for 3d super MTCs \footnote{This is a braided fusion category with trivial
centre which is equipped with a ``ribbon structure,”
 which allows the corresponding
(2+1)-dimensional TQFT to be placed on any oriented manifold. The TQFT is said to be \textit{isotropic}.}, and two theories related by a gapped interface have the same anomaly.

The $E_2$ page is therefore:
\begin{equation}
    E^{ij}_2 = \,\begin{array}{c|cccccccccc}
      j\\
      \\
      0 & 0 & 0&\ldots \\
    0&  0  &0&\ldots \\
     \SW &   \SW & 0& \hom(\bZ_2, \SW)&\ldots \\
     0& 0&  0 & 0 & 0 & 0 & 0\\
     \bZ_2&\bZ_2&0&\bZ_2&\bZ_2&\bZ_2&\bZ^2_2&\bZ^2_2 \ldots \\
     \bZ_2&\bZ_2&0&\bZ_2&\bZ_2&\bZ_2&\bZ^2_2&\bZ^2_2 \ldots \\
     \bC^\times & \bC^\times& 0 & \bZ_2 & 0 & \bZ_4& \bZ_2 & \bZ_2 & \bZ_2 & \bZ_2 \\
     \hline 
     & 0 & 1 & 2 & 3 & 4 & 5 & 6&7&8  & \quad i\,.
\end{array}
\end{equation}
The \textit{twisted} $d_2$ differentials are: 
\begin{align}
     d_2:& E^{i,2}_2=\rH^i(\bZ^F_2[2]\,; \bZ_2) \to E^{i+2,1}_2=\rH^{i+2}(\bZ^F_2[2]\,;\bZ_2) \quad&& X \mapsto \Sq^2 X+TX \\\notag 
     d_2:& E^{i,1}_2=\rH^i(\bZ^F_2[2]\,; \bZ_2) \to E^{i+2,0}_2=\rH^{i+2}(\bZ^F_2[2]\,; \bC^\times) \quad&& X \mapsto (-1)^{\Sq^2 X+TX} \,,
\end{align}
where $T$ is the generator of $\rH^\bullet(\bZ^F_2[2]\,; \bZ_2)$ in degree two.
The $E_3$ page is 

\begin{equation}
    E^{ij}_3 = \,\begin{array}{c|cccccccccc}
      j\\
      \\
      0 & 0 & 0&\ldots \\
    0&  0  &0&\ldots \\
     \SW &   \SW & 0& \hom(\bZ_2, \SW)&\ldots \\
     0& 0&  0 & 0 & 0 & 0 & 0\\
     \bZ_2&0&0&\bZ_2&0&0&\bZ_2& \ldots \\
     \bZ_2&0&0&0&\bZ_2&0&0& 0&0&\ldots \\
     \bC^\times & \bC^\times& 0 & 0 & 0 & \bZ_4& \bZ_2 & 0 & 0 & 0 \\
     \hline 
     & 0 & 1 & 2 & 3 & 4 & 5 & 6&7&8 & \quad i\,.
\end{array}
\end{equation}
\begin{rem}
 The generators of $\rH^5(\bZ^F_2[2]\,; \bZ_2)$ are $\Sq^2\Sq^1 T$ and $T \Sq^1 T$.  The $d_2$ differential annihilates $\left(\Sq^2 \Sq^1 T + T \Sq^1 T\right)$ leaving a $\bZ_2$ in bidegree (5,2). The $\bZ_2$'s and $\bZ_4$ in total degree four survive on $E_\infty$ \cite[Remark V.2]{johnsonfreyd2020classification}. The main result in \cite{JFRextensions} implies that the $\bZ_2$ in bidegree $(5,0)$ survives on $E_{\infty}$.
\end{rem}

There is potentially a $d_3$ differential that maps $ \hom(\bZ_2, \SW) \to E^{5,2}_3 = \bZ_2$, after which the spectral sequence stabilizes in total degree 6. Thus
$\mathcal{W}^6(\pt)$ is the kernel of this $d_3$.  By \cite[Proposition 5.18]{davydov2011structure} we have
\begin{equation}
    \SW = \SW_{\pt} \oplus \SW_2 \oplus \SW_{\infty}\,,
\end{equation}
where $\SW_{\pt}$ is generated by
the Witt classes of Abelian super MTC, $\SW_2$ is an elementary Abelian 2-group, and $\SW_{\infty}$ is a free group of countable rank. It was proved in
\cite[Theorem, 7.2]{ng2020higher} that $\SW_2$ is a group of infinite rank \footnote{In particular the spin MTC $\SO(2n+1)_{2n+1}, n \geq 1,$ are pairwise Morita inequivalent.}, which means that on $E_\infty$ the entry in (2,4) will also have infinite rank even after the $d_3$ differential.    As a result, $\cW^6(\pt)$ is also a group of infinite rank.  By construction, $\cW^6(\pt)$ is the group of Morita equivalence classes of 5d topological orders, and so we have verified equation \eqref{W6}:
\begin{theorem}\label{W6thm}
 There are infinitely many 5d bosonic topological orders which are ``chiral'' in the sense that they only admit gapless boundary. \qed
\end{theorem}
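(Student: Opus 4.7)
The plan is to read off the conclusion directly from the spectral sequence computation assembled in the paragraphs preceding the statement, and to argue that the contributions surviving to $E_\infty$ in total degree $6$ already have infinite rank, so no finer analysis of differentials is required.

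First I would recall the setup: by the Galois descent discussion at the start of this section, $\cW^\bullet(\pt)$ is the twisted $\SW^\bullet$-cohomology $\SW^\bullet(\bZ_2^F[2])$, and in particular $\cW^6(\pt)$ classifies the Morita equivalence classes of 5d bosonic topological orders. Being ``chiral'' in the stated sense means Morita-inequivalent to the vacuum, so it suffices to exhibit an infinite-rank subgroup of $\cW^6(\pt)$.

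Next I would extract the relevant part of the $E_\infty$ page in total degree $6$ from the $E_3$ table already displayed. The candidate entries are $(6,0)$, $(5,1)$, $(4,2)$, $(2,4)$, and $(0,6)$ after the listed $d_2$ differentials. By the computations in the paragraph following the $E_3$ table, $(6,0)$ contributes at most a finite piece, $(5,1)$ and $(4,2)$ were already killed or reduced to finite groups on $E_3$, and $(0,6) = 0$. The key entry is $(2,4) = \hom(\bZ_2,\SW)$, which detects the $2$-torsion of the fermionic Witt group.

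Now I would invoke the two facts already cited: by \cite[Proposition 5.18]{davydov2011structure}, $\SW = \SW_\pt \oplus \SW_2 \oplus \SW_\infty$, and by \cite[Theorem 7.2]{ng2020higher}, $\SW_2$ is an elementary abelian $2$-group of infinite (countable) rank. Hence $\hom(\bZ_2,\SW) \supseteq \hom(\bZ_2,\SW_2) = \SW_2$ has infinite rank. The only further differential that can affect this entry in total degree $6$ is the potential $d_3 : E_3^{2,4} \to E_3^{5,2} = \bZ_2$ noted in the excerpt; since its target is finite, its kernel is still of infinite rank. After this page the spectral sequence stabilizes in total degree $6$.

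The main (and essentially only) obstacle is confirming that no later differential emanating from or landing on $(2,4)$ can reduce it back to finite rank; but the target/source groups at the relevant bidegrees on $E_3$ are all finite, so any such differential can only change the rank by a finite amount. Therefore $\cW^6(\pt)$ contains an infinite-rank subgroup, proving that infinitely many 5d bosonic topological orders are pairwise Morita-inequivalent, establishing \eqref{W6} and the theorem.
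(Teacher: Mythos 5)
Your proposal is correct and takes essentially the same route as the paper: both arguments reduce to the single surviving entry $E_3^{2,4} = \hom(\bZ_2,\SW)$ in total degree $6$ of the Atiyah--Hirzebruch spectral sequence for $\cW^6(\pt) = \SW^6(\bZ_2^F[2])$, and both use \cite[Proposition 5.18]{davydov2011structure} together with \cite[Theorem 7.2]{ng2020higher} to conclude that the kernel of the only remaining differential $d_3 : E_3^{2,4} \to E_3^{5,2} = \bZ_2$, having finite target, still has infinite rank. Your extra check that no later differential touching $(2,4)$ can cut the rank down to finite is a harmless elaboration of the paper's remark that the sequence stabilizes in total degree $6$ after $d_3$.
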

This starkly contrasts our conclusion in section \ref{moritatrivial} for the fermionic case, where $\SW^6(\pt)$ was trivial. The source of the difference lies in the fact that the magnetic membrane lives in the bosonic world. 
If we were to ``fermionize" all of the bosonic theories, i.e. couple to spin structure, then the infinite rank group would trivialize.  

To gain a more physical intuition for these ungappable and chiral bosonic objects we comment on their construction in a manner similar to \cite{Fidkowski:2019zkn}, used for SPTs. The main takeaway for SPTs is that when constructing an SPT, we can place lower-dimensional invertible phases along homology cycle representatives dual to Stiefel-Whitney classes. This is what was done for the dual of the generalized double semion model in 5d to show that it is
equivalent to a twisted Dijkgraaf-Witten dual stacked with lower dimensional SPT phases. 

This takeaway leads to a construction of the chiral 5d phases gauranteed by Theorem~\ref{W6thm}.
Pick a spin-MTC $\cC$ representing an order-2 class in $\SW_2$ that is in the kernel of the $d_3$ differential. We place the 3d topological order built from $\cC$ along a representative of $w_2$, by this we mean we place $\cC$ along the homology cycle that is dual to $w_2$ (and away from $w_2$ we can just flood the phase with the vacuum). The choice of representative for $w_2$ should not change the theory, the reason for this culminates from the fact that $\cC^2$ is super-Witt trivial and furthermore $\cC$ is in the kernel of $d_3$.  The fact $\cC$ is order 2 has to do with protecting our theory under changes of representatives by a $\bZ_2^F[1]$-symmetry. Being in the kernel of $d_3$ is telling us that changes of triangulation that might lead to higher order anomalies do not show up.

To see why any 4d boundary theory can not be gapped, note that a representative of $w_2$ in the bulk will end along a representative of $w_2$ on the boundary. But $\cC$ is nontrivial, so that representative of $w_2$ on the 4d boundary will necessarily carry a 2d chiral theory, namely a chiral edge mode for $\cC$. For instance, suppose $\cC$ is $\SO(2n+1)_{2n+1}$, or some product thereof that is within the kernel of $d_3$. Then the 4d boundary condition will see chiral WZW modes supported on a representative of $w_2$.

\bibliography{5dTopOrder}{}
\bibliographystyle{alpha}
\end{document}